\documentclass[12pt]{article}

\usepackage{customtemplate}

\newcommand{\Q}{\mathbb{Q}}
\newcommand{\F}{\mathbb{F}}
\newcommand{\C}{\mathbb{C}}

\newcommand{\X}{\mathbf{X}}

\def\Fq{{\mathbb F}_q}

\def\X{{\textbf{X}}}

\def\AA{{\mathbb A}}

\def\clm{{C^\AA(\ell,m)}}
\def\C{{C^\AA(3,6)}}

\newcommand{\Calm}{{C^{\AA}(\ell, m)}}

\DeclareMathOperator{\ev}{Ev}
\DeclareMathOperator{\rank}{rank}
\DeclareMathOperator{\aut}{Aut}
\DeclareMathOperator{\GL}{GL}
\DeclareMathOperator{\SL}{SL}
\DeclareMathOperator{\wt}{wt}

\theoremstyle{definition}

\numberwithin{theorem}{section}

\title{The Weight Spectrum of the Affine Grassmann Code $C^{\AA}(3,6)$}
\author{Prasant Singh\footnote{Email: psinghprasant@gmail.com\\
Department of Mathematics, Indian Institute of Technology, Jammu, India-181221.}  and Rohit Yadav\footnote{Email: rohityadavau1998@gmail.com\\
Department of Mathematics, Indian Institute of Technology, Jammu, India-181221.}}

\date{}

\begin{document}
\maketitle
\begin{abstract}
In this article, we consider the affine Grassmann code $C^\AA(3,6)$, obtained from the affine open cell $\AA^9$ of the Grassmannian $G_{3,6}$. We exploit the representation of codewords as linear combinations of minors of all sizes of a generic $3\times3$ matrix and classify them according to the largest size of a minor occurring with a nonzero coefficient. Using this classification, we determine all possible Hamming weights of codewords of $C^\AA(3,6)$ and, for each weight, compute the number of codewords attaining that weight. Consequently, we obtain the complete weight spectrum of the affine Grassmann code $C^\AA(3,6)$.

\end{abstract}

\medskip
\textbf{Keywords:} Affine Grassmann codes, Weight spectrum, Grassmann codes.

\textbf{MSC2020:} 94B27, 14G50

\section{Introduction}\label{sec:intro}

Let $\F_q$ be the finite field with $q$ elements, where $q$ is a prime power. Using the language of projective systems \cite{TVN2007}, one can associate linear codes with subsets of $\F_q$-rational points of algebraic varieties. Affine Grassmann codes constitute one such class of codes. These codes are obtained by considering the projective system associated with the set of $\F_q$-rational points of an affine open cell defined by the nonvanishing of the last Pl\"ucker coordinate of the Grassmannian.

Let $\ell$ and $\ell'$ be positive integers such that $\ell\le\ell'$. Set $m=\ell+\ell'$ and $\delta=\ell\ell'$. Let $\AA^\delta$ denote the set of all $\ell\times \ell^\prime$ matrices over the field $\Fq$. Geometrically, $\AA^\delta$ can be realized as the set of $\F_q$-rational points of an affine open cell of the Grassmannian $G_{\ell,m}$, defined by the nonvanishing of a last Pl\"ucker coordinate. Thus, $\AA^\delta$ can be realized as a subset of the Grassmannian $G_{\ell, m}$ and a code can be obtained by puncturing the Grassmann code $C(\ell, m)$ introduced by Nogin \cite{Nogin1996} on the set $G_{\ell, m}\setminus \AA^\delta$. The code obtained from the affine open cell $\AA^\delta$ of the Grassmannian $G_{\ell, m}$ of all $\ell$-planes of $V=\Fq^m$ is called the affine Grassmann code and is denoted by $C^{\AA}(\ell, m)$. The systematic study of the affine Grassmann code $C^{\AA}(\ell, m)$ was formally initiated by Beelen, Ghorpade, and H{\o}holdt \cite{BGH2010}. In this article, they proved that the affine Grassmann code $\Calm$ is an $[n,k,d]_q$ code with
\begin{equation}\label{eq:parameters}
n=q^{\delta},\qquad k=\binom{m}{\ell},\qquad\text{and}\qquad
d=q^{\delta-\ell^2}\prod_{i=0}^{\ell-1}\bigl(q^{\ell}-q^{i}\bigr).
\end{equation}

Further, they also characterized the minimum weight codewords of $\Calm$ and showed that this code has a very large automorphism group. In a subsequent work, Beelen, Ghorpade, and H{\o}holdt \cite{BGT2012} studied the dual affine Grassmann codes. To be precise, they gave an explicit description of the dual affine Grassmann code $C^{\AA}(\ell,m)^{\perp}$ as an evaluation code and proved that  $C^{\AA}(\ell,m)^{\perp}$  is an $[n,n-k,d^\perp]_q$ code, where $n$ and $k$ are given by equation \eqref{eq:parameters} and $d^\perp$ is given by
\begin{equation}\label{eq:dualdist}
d^{\perp}=\begin{cases}3,&\text{if }q\ge3,\\ 4,&\text{if }q=2.\end{cases}
\end{equation}

Several interesting properties of affine Grassmann codes have been studied over the last two decades. For example, Ghorpade and Kaipa \cite{GK2013} explicitly determined the automorphism groups of $C^{\AA}(\ell,m)$.  Datta and Ghorpade \cite{DG2015} determined a few initial and terminal higher weights of $C^{\AA}(\ell,m)$ in the case when $\ell<\ell^\prime$. Beelen and Pinero \cite{BP2016} investigated the structure of the dual affine Grassmann code and showed that the supports of the minimum weight codewords of $\Calm^\perp$ satisfy certain nice geometric properties and enumerated the number of minimum weight codewords of these codes. More recently, building on the work of Beelen and Singh \cite{BPP2021} on majority-logic decoding for Grassmann codes, the authors, in collaboration with Piñero \cite{PPR2026}, proposed a majority-voting decoder for affine Grassmann codes $\Calm$ over nonbinary fields. They showed that the proposed algorithm can asymptotically correct up to $\lfloor d/2^{\ell+1}\rfloor$ errors.

Determining the weight spectrum of any class of code is one of the most challenging problems in the field of coding theory. For example, Nogin \cites{Nogin1996, Nogin1997} determined the weight spectrum of Grassmann codes $C(2, m)$ and $C(3, 6)$. Later, Kaipa and Pillai \cite{KP2013} determined the weight spectrum of the Grassmann code $C(3, 7)$.  In general, the problem of computing the weight spectrum of Grassmann code $C(\ell, m)$ corresponds to giving a classification of forms in $\bigwedge^{m-\ell}\Fq^m$, which is a difficult problem to solve, and therefore, the weight spectrum of all other nontrivial Grassmann codes is not known. Piñero and Singh \cite{PS2019} identified the affine Grassmannian $\AA^{2(m-2)}$ with the complement of the Schubert divisor $\Omega_\alpha(2,m)$ in the Grassmannian $G_{2,m}$, where $\alpha=(m-2,m)$. They established a projection from the Grassmann code $C(2,m)$ onto the affine Grassmann code $C^\AA(2,m)$ and showed that every codeword of $C(2,m)$ can be uniquely decomposed into a Schubert-code component and an affine Grassmann-code component. Using this correspondence and the classification of skew-symmetric matrices, they derived the weights of codewords of $C^\AA(2,m)$ and consequently determined its weight spectrum. Beyond the case $\ell=2$, the weight spectrum of $\Calm$ remains unknown, including for the codes $C^\AA(3,6)$ and $C^\AA(3,7)$, even though the weight spectra of the corresponding Grassmann codes $C(3,6)$ and $C(3,7)$ have been known for several decades.

In this article, we consider the problem of determining the weight spectrum of the affine Grassmann code $C^\AA(3,6)$. The approach used to determine the weight spectrum of the Grassmann code $C(3,6)$ does not directly extend to the affine setting. In particular, the natural group action on $C^\AA(3,6)$ leads to a considerably richer orbit structure, making a direct orbit-based analysis more difficult. We instead exploit the realization of codewords of $C^\AA(3,6)$ as linear combinations of minors of a $3\times3$ matrix. We classify the codewords according to the largest size of a minor occurring with a nonzero coefficient in such an expression and analyze the resulting cases separately. This enables us to determine all possible Hamming weights of codewords of $C^\AA(3,6)$ and, for each such weight, the number of codewords attaining it. Consequently, we obtain the complete weight enumerator of the affine Grassmann code $C^\AA(3,6)$. In an appendix, we also provide a streamlined computation of the weight spectrum of $C^\AA(2,m)$, improving upon the computation given in \cite{PS2019}.

\section{Preliminaries}\label{sec:prelim}
This section lays the groundwork for computing the weight spectrum of the affine Grassmann code $\C$. Let $\F_q$ be the finite field with $q$ elements, where $q$ is a prime power. Fix positive integers $\ell,\ell^\prime$ with $\ell\leq\ell^\prime$, and set $m=\ell+\ell^\prime$ and $\delta=\ell\ell^\prime$. Recall that the Grassmannian $G_{\ell,m}$, consisting of all $\ell$-dimensional subspaces of an $m$-dimensional vector space over $\F_q$, admits a decomposition into affine cells indexed by Pl\"ucker coordinates. Fixing a Pl\"ucker coordinate and considering the open subset on which it is nonzero gives an affine chart whose set of $\F_q$-rational points is naturally identified with $\AA^\delta(\F_q)$. Thus, $\AA^\delta(\F_q)$ may be viewed as the set of $\F_q$-rational points of a distinguished affine open subset of $G_{\ell,m}$; see \cite{BGH2010}*{Section VII} for further details.

The projective system defining the Grassmann code is obtained from the $\F_q$-rational points of $G_{\ell,m}$ under the Pl\"ucker embedding. Restricting this projective system to the affine open subset described above yields the projective system associated with the Pl\"ucker images of the points of $\AA^\delta(\F_q)$, which defines the affine Grassmann code. For completeness, and to fix the notation used throughout this article, we briefly recall the resulting evaluation construction.

Let $\mathcal F(\ell,m)$ be the $\Fq$-linear space generated by all minors of the generic $\ell\times\ell'$ matrix $\X=(X_{ij}),$ in the $\delta$ variables $X_{ij}$. By convention, the $0\times0$ minor of $\X$ is the constant function $1$. It is well known that $\mathcal F(\ell,m)$ is a $\binom{m}{\ell}$-dimensional subspace of the polynomial ring $\Fq[\X]$ in the $\delta$ indeterminates $X_{ij}$, where $1\leq i\leq\ell$ and $1\leq j\leq\ell'$. Fix an enumeration $P_1,P_2,\dots,P_{q^\delta}$ of $\AA^\delta(\Fq)$ and consider the evaluation map
\begin{align} \label{eq:evaluationmap}
   \ev:\mathcal F(\ell,m)&\longrightarrow\Fq^{q^\delta} \nonumber \\
f&\longmapsto c_f=\bigl(f(P_1),f(P_2),\dots,f(P_{q^\delta})\bigr). 
\end{align} 

It was shown in \cite{BGH2010} that $\ev$ is $\Fq$-linear and injective. Consequently, its image is a linear code over $\Fq$ of length $q^\delta$ and dimension $\binom{m}{\ell}$. This code is called the affine Grassmann code and is denoted by $\clm$. The length $n$, dimension $k$, and minimum distance $d$ of $\clm$ were determined by Beelen, Ghorpade, and H{\o}holdt in \cite{BGH2010} and are given by \eqref{eq:parameters}.

For a detailed study of the dual affine Grassmann code, we refer the reader to \cites{BGT2012, BP2016}. Although the complete automorphism group of $\Calm$ was determined by Ghorpade and Kaipa \cite{GK2013}, we will use a subgroup of $\aut(\clm)$ described by Beelen, Ghorpade, and H{\o}holdt in \cite{BGH2010}. For completeness, we briefly describe this subgroup.

Let $B\in\GL_\ell(\Fq)$, $A\in\GL_{\ell'}(\Fq)$, and $U\in\AA^\delta(\Fq)$. Consider the affine transformation
\begin{align*}
\psi_{U,A,B}:\AA^\delta(\Fq)&\longrightarrow\AA^\delta(\Fq),\\
P&\longmapsto BPA^{-1}+U.
\end{align*}
This is a bijection of $\AA^\delta(\Fq)$. Hence, for the fixed enumeration $\AA^\delta(\Fq)=\{P_1,\dots,P_{q^\delta}\},$ there is a unique permutation $\sigma$ of ${1,\dots,q^\delta}$ such that
$$ \bigl(\psi_{U,A,B}(P_1),\dots,\psi_{U,A,B}(P_{q^\delta})\bigr) = \bigl(P_{\sigma(1)},\dots,P_{\sigma(q^\delta)}\bigr). $$
We denote this permutation by $\sigma_{U,A,B}$. For $c=(c_1,\dots,c_{q^\delta})\in\clm$, we shall also write $\sigma_{U,A,B}(c)$ for the $n$-tuple
$$ (c_{\sigma(1)},\dots,c_{\sigma(q^\delta)}). $$
The following result shows that these permutations preserve the code $\clm$.

\begin{lemma}\cite{BGT2012}*{Lemma 7}\label{lemma: Autgrp}
Let $B\in\GL_\ell(\Fq)$, $A\in\GL_{\ell'}(\Fq)$, and $U\in\AA^\delta(\Fq)$. Then $\sigma_{U,A,B}\in\aut(\clm)$.
\end{lemma}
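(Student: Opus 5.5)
The plan is to show that the space $\mathcal F(\ell,m)$ is invariant under composition with $\psi_{U,A,B}$. Suppose that for every $f\in\mathcal F(\ell,m)$ the polynomial $f\circ\psi_{U,A,B}$, i.e. $f(B\X A^{-1}+U)$, again lies in $\mathcal F(\ell,m)$. Then
\[
\sigma_{U,A,B}(c_f)=\bigl(f(\psi_{U,A,B}(P_1)),\dots,f(\psi_{U,A,B}(P_{q^\delta}))\bigr)=c_{f\circ\psi_{U,A,B}}\in\clm .
\]
So $\sigma_{U,A,B}$ maps $\clm$ into itself. Being a coordinate permutation of a finite-dimensional space, it is then a bijection of $\clm$, hence an automorphism. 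By linearity it suffices to check invariance on the spanning set of minors.

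\textbf{Translation.} It suffices to treat $\psi_{U,A,B}$ as the composite of $P\mapsto BP$, $P\mapsto PA^{-1}$ and $P\mapsto P+U$, and to handle each separately. For the linear parts, take a $k\times k$ minor $\det \X_{I,J}$ with $|I|=|J|=k$. The Cauchy--Binet formula gives
\[
\det (B\X A^{-1})_{I,J}=\sum_{|K|=|L|=k}\det B_{I,K}\,\det \X_{K,L}\,\det (A^{-1})_{L,J},
\]
which is an $\Fq$-linear combination of $k\times k$ minors of $\X$. So these two maps preserve $\mathcal F(\ell,m)$, and they even preserve the filtration by minor size.

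\textbf{Translation by $U$.} I would use multilinearity of the determinant in its rows. Write each row of $(\X+U)_{I,J}$ as (row of $\X$) $+$ (row of $U$) and expand. Each term picks a subset $S\subseteq I$ of rows taken from $\X$ and the remaining rows from $U$. For fixed $S$, Laplace-expand along the rows in $S$. This gives a signed sum over $T\subseteq J$ with $|T|=|S|$ of $\det \X_{S,T}\cdot\det U_{I\setminus S,J\setminus T}$. Hence $\det(\X+U)_{I,J}$ is a combination, with constant coefficients, of minors of $\X$ of size at most $k$. The $S=\emptyset$ term is the constant $\det U_{I,J}$, which is the $0\times0$ minor times a scalar.

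The only real obstacle is this translation step. The point to get right is bookkeeping in the generalized Laplace expansion: the sign $(-1)^{\sum S+\sum T}$ and the fact that all complementary factors are constants. The signs do not matter for membership in the span, so a careful statement of the generalized Laplace expansion suffices. Combining the three cases proves the invariance, and the lemma follows as explained above.
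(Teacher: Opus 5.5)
Your proof is correct and complete. The identity $\sigma_{U,A,B}(c_f)=c_{f\circ\psi_{U,A,B}}$, Cauchy--Binet for $X\mapsto BXA^{-1}$, and the generalized Laplace expansion for $X\mapsto X+U$ together show that $\mathcal F(\ell,m)$ is stable under $\psi_{U,A,B}$, and that is exactly what is needed. The paper gives no proof of its own here and only quotes the result from Beelen--Ghorpade--H{\o}holdt; your argument is the standard one behind it, and the paper in fact carries out the $3\times3$ instances of both of your steps via adjugate identities in Lemmas~\ref{lem:V2-transitive}, \ref{lem:u-zero} and \ref{lem:translation-elim}.
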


In particular, the group of permutations
$$ \{\sigma_{U,A,B}\mid U\in\AA^\delta(\Fq),\ A\in\GL_{\ell'}(\Fq),\ B\in\GL_\ell(\Fq)\} $$
acts transitively on the coordinates of $\clm$. This transitivity will be useful in the proof of the main result in the next section, as it allows us to replace a codeword by another codeword of the same weight with a more convenient form.

We conclude this section with four counting lemmas that will be used repeatedly in the next section. These are elementary results, but we have not found suitable references for them; we therefore include their proofs for completeness. The first two concern the number of solutions of quadratic equations over $\Fq$, the third counts monic quadratic polynomials according to their number of roots, and the fourth counts matrices with prescribed determinant. As earlier, let $m\geq 1.$

\begin{lemma}\label{lem:hyperbolic}
Let $H:\Fq^m\times \Fq^m\to \Fq$ be the map defined by  $H(\mathbf u,\mathbf v)=\sum_{i=1}^m u_iv_i$. For $t\in\Fq$ let $N_H(t)=\lvert\{(\mathbf u,\mathbf v)\in\Fq^{2m}: H(\mathbf u,\mathbf v)=t\}\rvert$. Then
$$ N_H(t)= \begin{cases}
q^{2m-1}+q^{m}-q^{m-1}, & t=0,\\
q^{2m-1}-q^{m-1}, & t\neq 0 .
\end{cases} $$
\end{lemma}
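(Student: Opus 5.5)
We will count solutions by conditioning on $\mathbf u$. For a fixed $\mathbf u\in\Fq^m$, the map $\mathbf v\mapsto H(\mathbf u,\mathbf v)$ is an $\Fq$-linear functional on $\Fq^m$. We therefore split the count according to whether $\mathbf u=\mathbf 0$ or $\mathbf u\neq\mathbf 0$, and handle each case by elementary linear algebra.

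\emph{Case $\mathbf u=\mathbf 0$.} Here $H(\mathbf 0,\mathbf v)=0$ for every $\mathbf v$. This single $\mathbf u$ therefore contributes $q^m$ pairs when $t=0$ and no pairs when $t\neq0$.

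\emph{Case $\mathbf u\neq\mathbf 0$.} Now the functional $\mathbf v\mapsto\sum_i u_iv_i$ is nonzero, hence surjective onto $\Fq$. Each fibre is a coset of its kernel, which has dimension $m-1$. So for every $t\in\Fq$ there are exactly $q^{m-1}$ vectors $\mathbf v$ with $H(\mathbf u,\mathbf v)=t$. Since there are $q^m-1$ nonzero vectors $\mathbf u$, this case contributes $(q^m-1)q^{m-1}=q^{2m-1}-q^{m-1}$ pairs, independently of $t$.

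Adding the two cases gives
\[
N_H(0)=q^{2m-1}-q^{m-1}+q^m \quad\text{and}\quad N_H(t)=q^{2m-1}-q^{m-1}\ \text{for } t\neq0,
\]
which is the claimed formula. As a consistency check, $N_H(0)+(q-1)N_H(t)=q^{2m}$ for any $t\neq 0$.

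There is no real obstacle in this argument. The only point needing care is isolating $\mathbf u=\mathbf 0$: it is the sole value of $\mathbf u$ for which the linear functional fails to be surjective, and it accounts for the extra $q^m$ solutions when $t=0$.
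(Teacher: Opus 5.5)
Your proof is correct, but it takes a different route from the paper. The paper argues by induction on $m$. It checks $m=1$ directly by splitting on whether $u=0$, exactly as you do. It then peels off the last pair $u_mv_m$ and writes $N_m(t)=\sum_{s}N_{m-1}(s)N_1(t-s)$. This collapses to the recursion $N_m(t)=qN_{m-1}(t)+(q-1)q^{2m-2}$, valid for every $t$, and the paper then verifies that the stated closed forms satisfy it. You instead use bilinearity: for fixed $\mathbf u$ the equation is linear in $\mathbf v$. Your split into $\mathbf u=\mathbf 0$ and $\mathbf u\neq\mathbf 0$ is therefore the paper's base case carried out in all dimensions at once, with no induction. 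Your version is shorter and derives the formula rather than confirming a known one. It also shows directly why the count is the same for every $t\neq 0$, and that the extra $q^m$ at $t=0$ comes entirely from the slice $\mathbf u=\mathbf 0$. The paper's convolution step uses only that $H$ is a sum of forms in disjoint variables. The same mechanism therefore counts values of other orthogonal sums of quadratic forms, once the counts for the summands are known. For the lemma as stated, your argument is the more economical one.
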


\begin{proof}
For convenience, write $N_m(t)=N_H(t)$. First let $m=1$, so $H=uv$. Now $uv=0$ if and only if $u=0$ or $v=0$, and these two cases overlap only at $(0,0)$. Hence $N_1(0)=2q-1$. For $t\ne0$ we must have $u\ne0$, and then $v=u^{-1}t$ is determined, so $N_1(t)=q-1$. Both values agree with the stated formula at $m=1$.

Now let $m\ge2$ and write $H=H'+u_mv_m$, where $H'$ is the same form on the first $2(m-1)$ variables. We split the solutions of $H=t$ by the value $s$ of $H'$. This gives
$$ N_m(t)=\sum_{s\in\Fq}N_{m-1}(s)N_1(t-s). $$
Separate the term $s=t$ from the rest, and use $\sum_sN_{m-1}(s)=q^{2m-2}$. Then
$$ N_m(t)=N_{m-1}(t)(2q-1)+(q-1)\bigl(q^{2m-2}-N_{m-1}(t)\bigr)=qN_{m-1}(t)+(q-1)q^{2m-2}. $$
The same recursion holds for both $t=0$ and $t\ne0$. Substituting the formula for $N_{m-1}$ into it gives back the stated formula for $N_m$ in both cases. The lemma now follows by induction on $m$.
\end{proof}

\begin{lemma}\label{lem:QLC}
Let $n=2m+r$ with $m,r\ge1$, and let $f=H(\mathbf y)+\ell(\mathbf z)+c$, where $\mathbf y\in\Fq^{2m}$, $H$ is as in Lemma~\ref{lem:hyperbolic}, $\mathbf z\in\Fq^r$, $\ell$ is a linear form on $\Fq^r$, and $c\in\Fq$. Let
$$ Z(f)=\{(\mathbf y,\mathbf z)\in\Fq^{2m}\times \Fq^r :f(\mathbf y,\mathbf z)=0\} $$
be the set of zeros of $f$. Then
\begin{enumerate}
\item[(1)] If $\ell\equiv0$, then $|Z(f)|=q^rN_H(-c)$.
\item[(2)] If $\ell\not\equiv0$, then $|Z(f)|=q^{n-1}$ for every
$c\in\Fq$.
\end{enumerate}
\end{lemma}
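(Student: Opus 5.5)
The plan is to count zeros by fibering over one set of variables, treating the two parts of the statement separately and using Lemma~\ref{lem:hyperbolic} as the only nontrivial input.

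\emph{Part (1).} If $\ell\equiv0$, then $f(\mathbf y,\mathbf z)=H(\mathbf y)+c$ does not depend on $\mathbf z$. A point $(\mathbf y,\mathbf z)$ is a zero of $f$ exactly when $H(\mathbf y)=-c$, with $\mathbf z\in\Fq^r$ arbitrary. Hence $Z(f)$ is the product of $\{\mathbf y\in\Fq^{2m}: H(\mathbf y)=-c\}$ with $\Fq^r$. Its cardinality is $N_H(-c)\cdot q^r$, where $N_H$ is the function computed in Lemma~\ref{lem:hyperbolic}. Here $\mathbf y$ is read as the pair $(\mathbf u,\mathbf v)\in\Fq^m\times\Fq^m$. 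No further work is needed in this case.

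\emph{Part (2).} Suppose $\ell\not\equiv0$. Then $\ell\colon\Fq^r\to\Fq$ is a nonzero linear functional, so it is surjective with kernel of dimension $r-1$. Each fiber $\ell^{-1}(s)$, for $s\in\Fq$, is a coset of this kernel and therefore has exactly $q^{r-1}$ elements.

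Now fix any $\mathbf y\in\Fq^{2m}$. The condition $f(\mathbf y,\mathbf z)=0$ becomes $\ell(\mathbf z)=-H(\mathbf y)-c$. By the fiber count, this has exactly $q^{r-1}$ solutions $\mathbf z$, whatever the value of $H(\mathbf y)$ and of $c$. Summing over the $q^{2m}$ choices of $\mathbf y$ gives
\[
|Z(f)|=q^{2m}\,q^{r-1}=q^{n-1}.
\]
This count is independent of $c$, and in this case it does not even use the hyperbolic form.

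There is no real obstacle here. The only points needing care are to fiber over $\mathbf y$ rather than over the value of $H$ in part (2), and to note that the hypothesis $r\ge1$ is what makes $\ell$ a nonzero form on a nonzero space. For clarity I would write out the fiber sum explicitly, $|Z(f)|=\sum_{\mathbf y}\lvert\ell^{-1}(-H(\mathbf y)-c)\rvert$.
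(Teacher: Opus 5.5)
Your proof is correct and is essentially the paper's argument. Part (1) is identical. In part (2) the paper groups the $\mathbf y$ by the value $t=H(\mathbf y)$ and uses $\sum_t N_H(t)=q^{2m}$, which is the same computation as your direct sum over $\mathbf y$; so fibering over the value of $H$ works just as well and is not a point requiring special care.
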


\begin{proof}
Split $Z(f)$ by the value $t=H(\mathbf y)$. This gives $|Z(f)|=\sum_{t}N_H(t)\cdot\#\{\mathbf z:\ell(\mathbf z)=-t-c\}$.

Suppose $\ell\equiv0$. Then the inner count is $q^r$ when $t=-c$ and $0$ otherwise. This proves (1).

Suppose $\ell\not\equiv0$. Then the kernel of $\ell$ has index $q$ in $\Fq^r$, so $\ell$ takes every value exactly $q^{r-1}$ times. Hence $|Z(f)|=q^{r-1}\sum_tN_H(t)=q^{r-1}q^{2m}=q^{n-1}$. This proves (2).
\end{proof}

The next lemma is again a counting lemma that counts the number of quadratic monic polynomials over $\Fq$ with $0$, $1$ and $2$ roots. The lemma is standard, but we have not found a precise reference, and hence we include a proof as well.

\begin{lemma}\label{lem:roots}
For $i=0,1,2$, let
$$ N_i=\#\{(b,d)\in\Fq^2: t^2+bt+d\text{ has exactly $i$ distinct roots in }\Fq\}. $$
Then $N_1=q$ and $N_0=N_2=q(q-1)/2$.
\end{lemma}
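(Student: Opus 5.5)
The plan is to count monic quadratics by their factorization, which avoids discriminants and so works uniformly in every characteristic, including $q$ even. A monic polynomial $t^2+bt+d$ over $\Fq$ with at least one root $\alpha\in\Fq$ splits completely as $(t-\alpha)(t-\beta)$ with $\beta\in\Fq$. Such a polynomial is determined by the unordered pair (multiset) $\{\alpha,\beta\}$, and conversely the multiset is determined by the polynomial, since $\Fq[t]$ is a unique factorization domain. So the map from multisets $\{\alpha,\beta\}\subseteq\Fq$ to pairs $(b,d)=(-(\alpha+\beta),\alpha\beta)$ is a bijection onto the polynomials having at least one root.

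First I count $N_1$. The polynomials with exactly one distinct root are exactly those with $\alpha=\beta$, that is, $(t-\alpha)^2$ for $\alpha\in\Fq$. By the injectivity above these are pairwise distinct, so $N_1=q$.

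Next I count $N_2$. The polynomials with exactly two distinct roots correspond bijectively to $2$-element subsets $\{\alpha,\beta\}$ of $\Fq$ with $\alpha\neq\beta$, so $N_2=\binom q2=q(q-1)/2$.

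Finally, since $(b,d)$ ranges over $\Fq^2$, we have $N_0+N_1+N_2=q^2$. Hence
$$N_0=q^2-q-\frac{q(q-1)}{2}=\frac{q(q-1)}{2}.$$

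The only point needing care is the injectivity of the passage from root multisets to coefficient pairs. This is exactly unique factorization of $t^2+bt+d$ into linear factors, so I do not expect any real obstacle.
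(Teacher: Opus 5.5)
Your proof is correct and is essentially the paper's argument: both parametrize the polynomials that have a root by their root pairs via $(r,s)\mapsto(-(r+s),rs)$, count $q$ repeated-root polynomials and $\binom q2$ with two distinct roots, and obtain $N_0$ by subtracting from $q^2$. Your version states the injectivity on unordered pairs (multisets), which is the correct formulation. The paper states it for ordered pairs, where it fails because $(r,s)$ and $(s,r)$ have the same image, although its final count uses unordered pairs and is right.
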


\begin{proof}
Let $\psi((r,s))=(-(r+s),rs)$, a map from $\Fq^2$ to itself. If $(b,d)=\psi((r,s))$, then
$$ t^2+bt+d=(t-r)(t-s). $$
We first show that $\psi$ is injective. Suppose $\psi((r,s))=\psi((r',s'))$. Then
$$ (t-r)(t-s)=(t-r')(t-s'). $$
Evaluating at $t=r$ gives $r=r'$ or $r=s'$. In the first case, cancelling the common factor $t-r$ gives $s=s'$. In the second case, $s'=r$, and cancelling the common factor $t-r$ gives $s=r'$. Thus, in either case, $(r,s)=(r',s')$.

Next, $t^2+bt+d$ has a root $r\in\Fq$ if and only if $(b,d)=\psi((r,s))$ for some $s\in\Fq$. Indeed, dividing $t^2+bt+d$ by $t-r$ gives such an $s$. Thus, $\psi$ is a bijection between the pairs $(r,s)\in\Fq^2$ and the pairs $(b,d)$ for which $t^2+bt+d$ has at least one root in $\Fq$.

There are $q$ pairs with $r=s$, and these correspond to polynomials with exactly one distinct root. Hence $N_1=q$. There are $q(q-1)/2$ unordered pairs with $r\ne s$, and each such pair gives a polynomial with two distinct roots. Hence $N_2=q(q-1)/2$. Finally,
$$ N_0=q^2-N_1-N_2=\frac{q(q-1)}{2}. $$
This proves the lemma.
\end{proof}

\begin{lemma}\label{lem:prescribed-det} For $n\ge1$ and $c\in\Fq$,
$$\lvert\{A\in M_n(\Fq):\det A=c\}\rvert=
\begin{cases}
|\GL_n(\Fq)|/(q-1), & c\neq0,\\
q^{n^2}-|\GL_n(\Fq)|, & c=0.
\end{cases}$$
This common value for $c\neq0$ is $|\SL_n(\Fq)|$.
\end{lemma}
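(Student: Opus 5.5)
The proof rests on viewing the determinant as a group homomorphism $\det:\GL_n(\Fq)\to\Fq^{*}$ and counting its fibres.

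First, the case $c=0$ is immediate. A matrix $A\in M_n(\Fq)$ has $\det A=0$ exactly when $A\notin\GL_n(\Fq)$. Since $M_n(\Fq)$ has $q^{n^2}$ elements, the number of singular matrices is $q^{n^2}-|\GL_n(\Fq)|$.

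For $c\neq0$, the set $\{A:\det A=c\}$ is precisely the fibre $\det^{-1}(c)$ inside $\GL_n(\Fq)$.

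\emph{Surjectivity.} The homomorphism $\det$ is surjective. For any $c\in\Fq^{*}$ the diagonal matrix $D_c=\mathrm{diag}(c,1,\dots,1)$ satisfies $\det D_c=c$.

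\emph{Fibres are cosets.} Each fibre is a coset of the kernel. The kernel is $\SL_n(\Fq)$, and $A\mapsto D_cA$ is a bijection from $\SL_n(\Fq)$ onto $\det^{-1}(c)$, with inverse $A\mapsto D_c^{-1}A$. So every fibre over $c\neq0$ has exactly $|\SL_n(\Fq)|$ elements.

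\emph{Counting.} The group $\GL_n(\Fq)$ is the disjoint union of these $q-1$ fibres, all of the same size. Hence
$$|\SL_n(\Fq)|=\frac{|\GL_n(\Fq)|}{q-1}.$$
This gives the stated value for $c\neq0$ and identifies it with $|\SL_n(\Fq)|$.

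No step here is a real obstacle. The only points needing care are the existence of a preimage for every nonzero $c$, supplied by $D_c$, and the fact that translation by $D_c$ maps the kernel bijectively onto each fibre, which holds because $\det$ is multiplicative.
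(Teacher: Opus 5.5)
Your proof is correct and is essentially the paper's argument: both translate $\SL_n(\Fq)$ bijectively onto the fibre $\det^{-1}(c)$ and then use that $\det:\GL_n(\Fq)\to\Fq^{\times}$ is a surjective homomorphism with kernel $\SL_n(\Fq)$. The only differences are cosmetic: the paper multiplies on the right by an arbitrary $B_0$ with $\det B_0=c$, while you multiply on the left by the explicit $D_c=\mathrm{diag}(c,1,\dots,1)$, which also settles surjectivity, a point the paper merely asserts.
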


\begin{proof}
The case $c=0$ is clear. Now let $c\ne0$ and fix $B_0$ with $\det B_0=c$. The map $S\mapsto SB_0$ sends $\SL_n(\Fq)$ into $\{A:\det A=c\}$, and $A\mapsto AB_0^{-1}$ is its inverse. So the two sets have the same size. This common size is $|\GL_n(\Fq)|/(q-1)$, because $\det:\GL_n(\Fq)\to\Fq^\times$ is a surjective homomorphism with kernel $\SL_n(\Fq)$.
\end{proof}

This completes the preliminaries. We are now ready to compute the weights of the codewords of $C^{\AA}(3,6)$.

\section{Weight Spectrum of \texorpdfstring{$C^{\AA}(3,6)$}{CAA(3,6)}}
\label{sec:weight-C36}

In this section, we compute the weight spectrum of the affine Grassmann code $C^{\AA}(3,6)$. Recall that $C^{\AA}(3,6)$ is the image of the evaluation map defined in \eqref{eq:evaluationmap}. Since this evaluation map is injective, every codeword $c\in C^{\AA}(3,6)$ is of the form $c=c_f$ for a unique $f\in\mathcal F(3,6)$, where $\mathcal F(3,6)$ is the linear span of all minors, including the $0\times0$ minor, of the generic matrix $\X=(X_{ij})$. Here, $c_f=\ev(f)$ denotes the codeword obtained by evaluating $f$ at the points of $M_3(\Fq)=\AA^9(\Fq).$ The dimension and minimum distance of $C^\AA(3,6)$ are given by \eqref{eq:parameters}. For $f\in\mathcal F(3,6)$, we define the Hamming weight of the codeword $c_f=\ev(f)$ by
$$ \wt(f):=\wt(c_f)=\bigl|\{A\in M_3(\Fq):f(A)\neq0\}\bigr|. $$
The vector space $\mathcal F(3,6)$ is naturally graded as
$$ \mathcal F(3,6)=V_0\oplus V_1\oplus V_2\oplus V_3, $$
where $V_i$ denotes the vector space spanned by all $i\times i$ minors of $\X$, for $0\leq i\leq3$. In particular, $V_0=\Fq$, $V_1$ is spanned by the nine entries $X_{ij}$, $V_2$ by the nine $2\times2$ minors, and $V_3$ by $\det X$. We identify $V_1$ and $V_2$ with $M_3(\Fq)$ as follows. A linear form $\ell(X)=\sum_{i,j}\alpha_{ij}X_{ij}$ is identified with its coefficient matrix $(\alpha_{ij})\in M_3(\Fq)$. For $V_2$, we use the adjugate of $X$, namely, the transpose of its cofactor matrix: $\operatorname{adj}(X)_{ij}=(-1)^{i+j}M_{ji}(X),$ where $M_{ji}(X)$ denotes the $2\times2$ minor of $X$ obtained by deleting row $j$ and column $i$. We will use the standard identities
$$ X\operatorname{adj}(X) =\operatorname{adj}(X)X =(\det X)I_3, $$
and, for invertible $U$, $\operatorname{adj}(U)=\det(U)U^{-1},$ as well as $\operatorname{adj}(ZW)=\operatorname{adj}(W)\operatorname{adj}(Z).$

The nine entries of $\operatorname{adj}(X)$ are, up to sign, the nine $2\times2$ minors of $X$ and therefore form a basis of $V_2$. Hence every element of $V_2$ can be written uniquely as
\begin{equation} \label{eq:Adjrepr}
    q_2^C(X):=\sum_{1\leq i,j\leq3}
C_{ij}\operatorname{adj}(X)_{ij}
\end{equation}
for a unique matrix $C\in M_3(\Fq)$, which we call its coefficient matrix. Now, for $3\times 3$ matrices, if we define $\langle C, X\rangle:=\operatorname{tr}(C^tN)$, then
\begin{equation}
    \label{eq:Tracefunction}
    q_2^C(X)=\langle C, \operatorname{adj}X\rangle.
\end{equation}

For example, the quadratic parts of Propositions~\ref{prop:rank1}, \ref{prop:rank2}, and \ref{prop:rank3} have coefficient matrices $E_{11}$, $E_{22}+E_{33}$, and $I_3$, of ranks $1$, $2$, and $3$, respectively, where $E_{ij}$ denotes a matrix unit.

Throughout, when referring to the \emph{rank} of an element of $V_1$ or $V_2$, we mean the rank of its coefficient matrix in the above sense. For an element of $V_2$, this should not be confused with its rank as a quadratic form. For example,
$$ q_2^{E_{11}}(X)=X_{22}X_{33}-X_{23}X_{32} $$
has a coefficient matrix of rank $1$, whereas, viewed as a quadratic form in the nine variables $X_{ij}$, it is a hyperbolic quadratic form of rank $4$. The next lemma is just a view of the determinant.

\begin{lemma}\label{lem:triple-product}
Let $A\in M_3(\Fq)$ have rows $r_1,r_2,r_3$. Then
$$ \det A=r_3\cdot(r_1\times r_2), $$
where, for $r_1=(a_{11},a_{12},a_{13}),\quad r_2=(a_{21},a_{22},a_{23}),$ we define
$$ r_1\times r_2 := (a_{12}a_{23}-a_{13}a_{22},\, a_{13}a_{21}-a_{11}a_{23},\, a_{11}a_{22}-a_{12}a_{21}), $$
and $\cdot$ denotes the usual dot product on $\Fq^3$.
\end{lemma}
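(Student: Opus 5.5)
The identity is the cofactor expansion of $\det A$ along the third row, so the plan is to compute $r_1\times r_2$ entrywise, identify its components with the signed $2\times2$ minors taken from the first two rows, and then quote Laplace expansion.

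First, write out the signed cofactors of the third-row entries. Deleting row $3$ and column $j$ leaves a $2\times 2$ minor $M_{3j}(A)$ built from $r_1$ and $r_2$.

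For $j=1$ we get $M_{31}(A)=a_{12}a_{23}-a_{13}a_{22}$, with sign $(-1)^{3+1}=+1$.

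For $j=2$ we get $M_{32}(A)=a_{11}a_{23}-a_{13}a_{21}$, with sign $(-1)^{3+2}=-1$. The cofactor is therefore $a_{13}a_{21}-a_{11}a_{23}$.

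For $j=3$ we get $M_{33}(A)=a_{11}a_{22}-a_{12}a_{21}$, with sign $+1$.

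These three cofactors are exactly the three components of $r_1\times r_2$ as defined in the statement.

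Second, apply Laplace (cofactor) expansion of the determinant along the third row. This expansion is valid over any commutative ring, and in particular over $\Fq$. It gives
$$\det A=\sum_{j=1}^3 a_{3j}(-1)^{3+j}M_{3j}(A)=r_3\cdot(r_1\times r_2),$$
which is the claimed identity.

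If one prefers to avoid citing Laplace expansion, one can instead expand the Leibniz formula over the six permutations of $S_3$, grouping the terms by which entry of $r_3$ they contain, and compare with the expansion above.

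There is no real obstacle here. The only point needing care is the sign of the middle component, which is handled by the explicit sign check for $j=2$ in the first step.
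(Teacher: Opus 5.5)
Your proposal is correct and follows the same route as the paper, which also just invokes Laplace expansion along the third row. Your explicit check that the three signed cofactors (in particular the sign of the middle one) match the components of $r_1\times r_2$ is exactly the detail the paper leaves implicit.
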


\begin{proof}
The proof follows simply from the Laplace expansion of $A$ along the third row.
\end{proof}

\begin{lemma}\label{lem:cross-fibre}
Let  $(r_1,r_2)\in \Fq^3\times\Fq^3$. The following holds.
\begin{enumerate}
\item[(1)] $\lvert\{(r_1,r_2):r_1\times r_2=0\}\rvert=q^6-(q^3-1)(q^3-q)$.
\item[(2)] For a fixed nonzero $w\in\Fq$ we have $\lvert\{(r_1,r_2):r_1\times r_2=w\}\rvert=q(q^2-1)$. This value is the same for every nonzero $w$.
\end{enumerate}
\end{lemma}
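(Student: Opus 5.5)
The plan is to use the fact that $r_1\times r_2$ is exactly the vector of (signed) $2\times2$ minors of the $2\times3$ matrix $R$ with rows $r_1,r_2$. For part (1), $r_1\times r_2=0$ holds if and only if all $2\times2$ minors of $R$ vanish, which happens if and only if $\rank R\le1$, that is, $r_1,r_2$ are linearly dependent. So I will count the complement: ordered linearly independent pairs are obtained by choosing $r_1\ne0$ in $q^3-1$ ways and then $r_2\notin\operatorname{span}(r_1)$ in $q^3-q$ ways. Subtracting from $q^6$ gives $q^6-(q^3-1)(q^3-q)$.

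For part (2) I read $w$ as a fixed nonzero vector of $\Fq^3$. There are two steps.

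\emph{Step 1: the fibres over nonzero $w$ all have the same size.} For $M\in\GL_3(\Fq)$, I will use the identity
$$(Mr_1)\times(Mr_2)=\operatorname{adj}(M)^t\,(r_1\times r_2).$$
This follows from Lemma~\ref{lem:triple-product}: for every $v$,
$$v\cdot\bigl((Mr_1)\times(Mr_2)\bigr)=\det[Mr_1,Mr_2,v]=\det M\,\det[r_1,r_2,M^{-1}v],$$
where columns are used, which is harmless since $\det$ is transpose-invariant. Then I apply $\det(M)M^{-1}=\operatorname{adj}(M)$. Thus $(r_1,r_2)\mapsto(Mr_1,Mr_2)$ is a bijection from the fibre over $w$ to the fibre over $\operatorname{adj}(M)^tw$. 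Since $M\mapsto \operatorname{adj}(M)^t$ maps $\GL_3(\Fq)$ onto $\GL_3(\Fq)$, and $\GL_3(\Fq)$ acts transitively on $\Fq^3\setminus\{0\}$, all nonzero fibres have equal size. (If preferred, one could avoid the adjugate by a direct transitivity argument.)

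\emph{Step 2: compute the common size.} By part (1), the pairs with nonzero cross product are exactly the $(q^3-1)(q^3-q)$ independent pairs. These are distributed equally among the $q^3-1$ nonzero values of $w$. Each fibre therefore has $q^3-q=q(q^2-1)$ elements.

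The only step needing real care is the equivariance identity in Step 1, and I expect it to go through directly via the triple product of Lemma~\ref{lem:triple-product}.
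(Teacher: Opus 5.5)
Part (1) is exactly the paper's argument, and you correctly read $w$ as a nonzero vector of $\Fq^3$. For part (2) you take a genuinely different route.

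The paper works inside a single fibre. It observes that $r_1\times r_2=w$ forces $r_1,r_2\in w^\perp$, picks a basis $w_1,w_2$ of this plane with $w_1\times w_2=\lambda w$, and writes $r_1\times r_2=(x_1y_2-x_2y_1)\lambda w$. It then counts $2\times 2$ matrices of determinant $\lambda^{-1}$ via Lemma~\ref{lem:prescribed-det}.

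You instead prove by symmetry that all nonzero fibres have the same size, then divide the $(q^3-1)(q^3-q)$ independent pairs from part (1) evenly among the $q^3-1$ nonzero values. Your identity $(Mr_1)\times(Mr_2)=\operatorname{adj}(M)^t(r_1\times r_2)$ is correct. The induced map between fibres is a bijection, since $\operatorname{adj}(M^{-1})^t=(\operatorname{adj}(M)^t)^{-1}$.

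Your route is shorter and reuses part (1). It needs neither the $w^\perp$ analysis nor the prescribed-determinant count. The paper's route, however, yields an explicit parametrization of each fibre, and the paper relies on it later. In Proposition~\ref{prop:cubic-rank3}, with $w=-e_3$, the fibre is described as the pairs $((x_1,x_2,0),(y_1,y_2,0))$ with $x_1y_2-x_2y_1=-1$, and this description is used to compute $\nu(\alpha_0)$. A pure averaging argument does not supply that.

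One claim in your Step 1 is false as written. The map $M\mapsto\operatorname{adj}(M)^t$ does not send $\GL_3(\Fq)$ onto $\GL_3(\Fq)$ when $q$ is odd, because $\det\operatorname{adj}(M)=\det(M)^2$ is always a square. The proof of Lemma~\ref{lem:V2-transitive} shows the image is exactly the invertible matrices with square determinant.

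You do not need surjectivity, only that the image acts transitively on $\Fq^3\setminus\{0\}$. For $M\in\SL_3(\Fq)$ one has $\operatorname{adj}(M)^t=M^{-t}$, so the image contains $\SL_3(\Fq)$. The group $\SL_3(\Fq)$ is already transitive on nonzero vectors: extend a nonzero vector to a basis and rescale another column to get determinant $1$. With this one-line correction your proof is complete.
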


\begin{proof}
(1) Note that $r_1\times r_2=0$ if and only if $r_1,r_2$ are linearly dependent. There are $(q^3-1)(q^3-q)$ independent pairs, so there are $q^6-(q^3-1)(q^3-q)$ dependent pairs.

(2) Fix a nonzero  $w\in \Fq$ and define $w^\perp=\{v\in\Fq^3:v\cdot w=0\}$. If $r_1\times r_2=w$ then both $r_1,\; r_2\in w^\perp$. This follows simply because $r_i\cdot(r_1\times r_2)=\det[r_i;r_1;r_2]=0$  for $i=1,2$. Thus, the map $v\mapsto v\cdot w$ is a nonzero linear map from  $\Fq^3$ to $\Fq$, and hence the kernel $w^\perp$ has dimension $2$.

Now, pick a basis $w_1,w_2$ of $w^\perp$; since $w_1, \; w_2$ are linearly independent, $w_1\times w_2\ne0$. Also, as $w_i\cdot(w_1\times w_2)= 0$ for $i=1, 2$, we get that $w_1\times w_2$ lies in $(w^\perp)^\perp$. Further, $(w^\perp)^\perp=\Fq w$ as the dot product on $\Fq^3$ is nondegenerate and hence $w_1\times w_2=\lambda w$ for some $\lambda\ne0$.

If $r_1=x_1w_1+x_2w_2$ and $r_2=y_1w_1+y_2w_2$ then $r_1\times r_2=(x_1y_2-x_2y_1)\lambda w$. Thus, we have $r_1\times r_2=w$ if and only if $x_1y_2-x_2y_1=\lambda^{-1}$. By Lemma~\ref{lem:prescribed-det} with $n=2$, the number of such $(x_1,x_2,y_1,y_2)$ is $|\GL_2(\Fq)|/(q-1)=q(q^2-1)$. This does not depend on $w$.
\end{proof}

We will now determine the weight of codewords $c_f\in C^\AA(3, 6)$ such that $f\in V_0\oplus V_1\oplus V_2$, that is, when the function $\det(\X)$ is missing from the presentation of $f$ as a linear combination of minors of $\X$.

\begin{lemma}\label{lem:V2-transitive}
If $C,\;C'\in M_3(\Fq)$ are two matrices with $\rank(C)=\rank(C^\prime)$. Then there exist $A,B\in\GL_3(\Fq)$ and a nonzero $\beta\in\Fq$ satisfying
$$ q_2^{C'}(X)=\beta\, q_2^{C}(AXB). $$
Consequently, as $(\alpha_0, \ell)$ ranges over $V_0\oplus V_1$, the weights $\wt(q_2^{C}+\ell+\alpha_0)$ and $\wt(q_2^{C'}+\ell+\alpha_0)$ take the same values the same number of times.
\end{lemma}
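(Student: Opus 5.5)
The plan is to compute how $q_2^C$ transforms under the substitution $X\mapsto AXB$, and then use the rank normal form for $C$. By the product rule $\operatorname{adj}(ZW)=\operatorname{adj}(W)\operatorname{adj}(Z)$ we get
$$\operatorname{adj}(AXB)=\operatorname{adj}(B)\operatorname{adj}(X)\operatorname{adj}(A).$$
For invertible $A,B$ we have $\operatorname{adj}(A)=\det(A)A^{-1}$ and $\operatorname{adj}(B)=\det(B)B^{-1}$. Substituting and using the trace description \eqref{eq:Tracefunction}, $\langle C,Y\rangle=\operatorname{tr}(C^tY)$, gives
$$ q_2^C(AXB)=\det(A)\det(B)\operatorname{tr}\bigl(C^tB^{-1}\operatorname{adj}(X)A^{-1}\bigr)=\det(AB)\bigl\langle (B^{-1})^tCA^{-t},\operatorname{adj}X\bigr\rangle .$$
Cyclicity of the trace is what moves $A^{-1}$ to the front. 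Hence
$$q_2^C(AXB)=\det(AB)\,q_2^{C''}(X),\qquad C''=B^{-t}CA^{-t}.$$

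Now suppose $\rank C=\rank C'$. Then $C$ and $C'$ are equivalent matrices, so there are $P,Q\in\GL_3(\Fq)$ with $C'=PCQ$. Choose $B=P^{-t}$ and $A=Q^{-t}$, so that $B^{-t}CA^{-t}=PCQ=C'$. Setting $\beta=\det(AB)^{-1}\neq0$, the identity above becomes $q_2^{C'}(X)=\beta\,q_2^C(AXB)$, which is the first claim. The only real care needed is the transpose and inverse bookkeeping, and the identity can be checked quickly on matrix units.

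For the consequence, let $\alpha_0\in V_0$ and $\ell\in V_1$, and consider the map $\Phi(X)=AXB$. It is a bijection of $M_3(\Fq)$, and by Lemma~\ref{lemma: Autgrp} it induces a weight-preserving permutation of coordinates. Scaling a function by the nonzero constant $\beta^{-1}$ also does not change its weight. Therefore
$$\wt\bigl(q_2^{C'}+\ell+\alpha_0\bigr)=\wt\bigl(\beta^{-1}q_2^{C'}+\beta^{-1}\ell+\beta^{-1}\alpha_0\bigr).$$
We have $\beta^{-1}q_2^{C'}(X)=q_2^C(AXB)$. Write $\beta^{-1}\ell(X)=\ell'(AXB)$, where $\ell'(Y):=\beta^{-1}\ell(A^{-1}YB^{-1})$; this is again a linear form, so $\ell'\in V_1$. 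Evaluating at $X=\Phi^{-1}(Y)$ and using that $\Phi$ is a bijection then gives
$$\wt\bigl(q_2^{C'}+\ell+\alpha_0\bigr)=\wt\bigl(q_2^{C}+\ell'+\beta^{-1}\alpha_0\bigr).$$

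The map $(\alpha_0,\ell)\mapsto(\beta^{-1}\alpha_0,\ell')$ is a linear bijection of $V_0\oplus V_1$. So as $(\alpha_0,\ell)$ ranges over $V_0\oplus V_1$, the two weight multisets coincide, which is the second claim.

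The main obstacle is only getting the adjugate and transpose identities right. I would verify the key formula $q_2^C(AXB)=\det(AB)\,q_2^{B^{-t}CA^{-t}}(X)$ once on diagonal $A$ and $B$ as a sanity check.
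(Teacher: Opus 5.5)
Your proof is correct. It differs from the paper's at the one step that takes real work.

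The paper keeps the adjugates intact and writes $q_2^C(AXB)=q_2^{C''}(X)$ with $C''=\operatorname{adj}(B)^tC\operatorname{adj}(A)^t$. It must then determine which matrices occur as $\operatorname{adj}(U)^t$ for $U\in\GL_3(\Fq)$, namely exactly the invertible matrices with square determinant. So it has to realize $C'=\beta PCQ$ with $P,Q$ of square determinant, and this forces a case split on the rank:
\begin{itemize}
\item for $\rank C<3$ it multiplies the normal-form factors by $\operatorname{diag}(1,1,\lambda)$ to get determinant-one factors and $\beta=1$;
\item for $\rank C=3$ it takes $\beta=\det C'/\det C$ and $P=\beta^{-1}C'C^{-1}$.
\end{itemize}

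You instead expand $\operatorname{adj}(A)=\det(A)A^{-1}$ and $\operatorname{adj}(B)=\det(B)B^{-1}$ and pull the scalar $\det(AB)$ out of the coefficient matrix. Then $B^{-t}CA^{-t}$ already ranges over the whole equivalence class of $C$. Any factorization $C'=PCQ$ is realized by $B=P^{-t}$, $A=Q^{-t}$, and the scalar is absorbed into $\beta=\det(AB)^{-1}$. This removes both the square-determinant characterization and the rank case split. The paper's route gains only the extra information, not needed for the statement, that one may take $\beta=1$ when $\rank C<3$.

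You also prove the ``consequently'' part explicitly, via the linear bijection $(\alpha_0,\ell)\mapsto(\beta^{-1}\alpha_0,\ell')$ of $V_0\oplus V_1$; the paper leaves this implicit. You do not need Lemma~\ref{lemma: Autgrp} there, since the bijection $X\mapsto AXB$ of $M_3(\Fq)$ already preserves the number of nonzeros.
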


\begin{proof}
We first determine how the substitution $X\mapsto AXB$ acts on the coefficient matrix. Note that $\operatorname{adj}(AXB) = \operatorname{adj}(B)\operatorname{adj}(X)\operatorname{adj}(A).$ For arbitrary $P,Q\in M_3(\Fq)$, we have
\begin{align*}
\langle C, PNQ\rangle
&=\operatorname{tr}(C^tPNQ)\\
&=\operatorname{tr}(QC^tPN)\\
&=\operatorname{tr}\bigl((P^tCQ^t)^tN\bigr)\\
&=\langle P^tCQ^t,N\rangle.
\end{align*}
Applying this identity with $P=\operatorname{adj}(B),\ N=\operatorname{adj}(X),\ Q=\operatorname{adj}(A),$ and equation \eqref{eq:Tracefunction}, we obtain
\begin{equation}\label{eq:V2-action}
q_2^C(AXB)=q_2^{C''}(X),
\qquad
\text{where } C''=\operatorname{adj}(B)^tC\operatorname{adj}(A)^t.
\end{equation}

We next determine the matrices that can occur as $C''$ in \eqref{eq:V2-action}. For $U\in\GL_3(\Fq)$, we have $\operatorname{adj}(U)=\det(U)U^{-1},$ and hence $\operatorname{adj}(U)$ is invertible with
$$ \det\bigl(\operatorname{adj}(U)\bigr)=\det(U)^2, $$
which is a square in $\Fq^\times$. Conversely, let $V\in\GL_3(\Fq)$ be a matrix whose determinant is a square in the field. If we write $\det(V)=d^2$ for some nonzero $d\in\Fq$ and set $W=V^t$, then we have $\det(W)=d^2$. Now, take $U=dW^{-1}.$ Then
$$ \det(U)=d^3\det(W)^{-1}=d $$
and consequently
$$ \operatorname{adj}(U) =\det(U)U^{-1} =d(d^{-1}W)=W. $$
Thus $\operatorname{adj}(U)^t=V$. It follows that, as $A$ and $B$ range over $\GL_3(\Fq)$, the matrices $\operatorname{adj}(A)^t$ and $\operatorname{adj}(B)^t$ range precisely over the invertible matrices with square determinant. In particular, $C''$ and $C$ have the same rank.

It therefore remains to show that, for matrices $C,C'\in M_3(\Fq)$ of the same rank, there exist $\beta\in\Fq^\times$ and $P,Q\in\GL_3(\Fq)$, both having square determinant, such that $C'=\beta PCQ.$ Indeed, since
$$ \beta q_2^{C''}(X)=q_2^{\beta C''}(X), $$
such a representation of $C'$ will give the required relation
$$ q_2^{C'}(X)=\beta q_2^C(AXB). $$
Let $r$ be the common rank of $C$ and $C'$, and let $E_r=\operatorname{diag}(1,\ldots,1,0,\ldots,0)$, where the first $r$ diagonal entries are $1$. By the standard rank-equivalence theorem, there exist $P_0,Q_0,P_1,Q_1\in\GL_3(\Fq)$ such that
$$ C=P_0E_rQ_0,\qquad C'=P_1E_rQ_1, $$
Suppose first that $r<3$. Since the third row and third column of $E_r$ are zero, the factors $P_i,Q_i$ can be multiplied on the appropriate side by $\operatorname{diag}(1,1,\lambda)$ without changing $P_iE_rQ_i$. Thus, we may assume
$$ \det P_i=\det Q_i=1\qquad (i=0,1). $$
Consequently,
$$ C'=(P_1P_0^{-1})C(Q_0^{-1}Q_1), $$
where both factors have determinant $1$. Hence we may take $\beta=1$.

If $r=3$, then $C$ and $C'$ are invertible. Put
$$ a=\frac{\det C'}{\det C},\qquad \beta=a,\qquad P=a^{-1}C'C^{-1},\qquad Q=I_3. $$
Then $C'=\beta PCQ,$ while $\det P=a^{-2}$ is a square and $\det Q=1$. Thus $P$ and $Q$ have square determinants.

\end{proof}

For the linear part, when $\det X$ is present, the corresponding statement is simpler. If $\ell$ has coefficient matrix $L$, then $\ell(AXB)$ has coefficient matrix $A^{t}LB^{t}$. No adjugate is involved here, and $\GL_3(\Fq)\times\GL_3(\Fq)$ acts transitively on the matrices of each fixed rank. We return to this in Lemma~\ref{lem:rank-only}. Now, we are ready to compute the weight spectrum of the affine Grassmann code $C^\AA(3, 6)$. We divide the computation into three parts, depending on the maximal degree of the codeword $c\in C^\AA(3, 6)$, thought of as a function of the graded vector space $\mathcal F(3,6).$

\subsection{Hamming Weights of Linear and Quadratic Functions}
In this subsection, we consider the case when $f\in V_0\oplus V_1\oplus V_2$. Since the zero codeword is the only codeword of Hamming weight $0$, we consider only nonzero codewords. We begin with the simplest case, namely $f\in V_0\oplus V_1$. The following proposition determines the weight distribution of codewords $c_f\in C^\AA(3,6)$ arising from nonzero $f\in V_0\oplus V_1$.

\begin{proposition}\label{prop:linear}
Let $f=\sum_{1\le i,j\le 3}\alpha_{ij}X_{ij}+\alpha_0\in\mathcal F(3,6)$ be a nonzero function. Then the following holds.
\begin{enumerate}
\item[(1)] If $\alpha_{ij}=0$ for all $i,j$ and $\alpha_0\ne0$, then $\wt(f)=q^9$. There are $q-1$ such $f$.
\item[(2)] If some $\alpha_{ij}\ne0$, then $\wt(f)=q^9-q^8$. There are $q^{10}-q$ such $f$.
\end{enumerate}
\end{proposition}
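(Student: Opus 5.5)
The plan is to treat the two cases separately, reading off weights directly from the structure of $f$ as a polynomial of degree at most $1$ on $\AA^9(\Fq)=M_3(\Fq)$. We then count the functions in each class using the fact, recalled in Section~\ref{sec:prelim}, that the evaluation map $\ev$ is injective. Injectivity means that distinct coefficient tuples $(\alpha_0,(\alpha_{ij}))$ give distinct codewords. Hence counting functions $f$ is the same as counting codewords.

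For part (1), $f$ is the nonzero constant $\alpha_0$. It therefore does not vanish at any of the $q^9$ points of $M_3(\Fq)$, so $\wt(f)=q^9$. The number of such $f$ is the number of choices of $\alpha_0\in\Fq^\times$, namely $q-1$.

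For part (2), suppose some $\alpha_{ij}\neq0$. Then $\lambda(X)=\sum\alpha_{ij}X_{ij}$ is a nonzero linear functional $\Fq^9\to\Fq$, so it is surjective. Its fibres are cosets of its kernel, a subspace of dimension $8$. The zero set of $f$ is the fibre $\lambda^{-1}(-\alpha_0)$, which has exactly $q^8$ points. Thus $\wt(f)=q^9-q^8$, independently of $\alpha_0$. Alternatively, this is Lemma~\ref{lem:QLC}(2) with the quadratic part absent, or one may use the transitive affine group of Lemma~\ref{lem:Autgrp} to reduce to $f=X_{11}$. Counting such $f$, there are $q^9-1$ nonzero coefficient matrices $(\alpha_{ij})$ and $q$ choices of $\alpha_0$. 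This gives $q(q^9-1)=q^{10}-q$ functions.

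There is no real obstacle here. The only point needing care is to justify that functions and codewords are in bijection, which is exactly the injectivity of $\ev$. As a consistency check, the two counts together with the zero function exhaust $V_0\oplus V_1$: indeed $(q-1)+(q^{10}-q)+1=q^{10}$.
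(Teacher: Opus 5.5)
Your proof is correct and follows the paper's argument: a nonzero constant has full support $q^9$, a nonconstant affine function vanishes on an affine hyperplane of $q^8$ points, and the counts $q-1$ and $q(q^9-1)=q^{10}-q$ follow directly. One small caveat on your aside: Lemma~\ref{lem:QLC}(2) assumes $m\ge1$, so it does not literally cover the case with no quadratic part, but your main argument does not rely on it.
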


\begin{proof}
If $f$ is a nonzero function with $\alpha_{ij}=0$ for all $i,j$, then $f=\alpha_0$ is a nonzero constant function and hence $\wt(f)=q^9$. This proves case $(1)$. On the other hand, if $\alpha_{i, j}\neq 0$ for some $i, j$, then $f$ is an affine linear map from $\Fq^9$ to $\Fq$. Thus, the zero set of $f$ is an affine hyperplane in $\Fq^9$ and hence is of size $q^8$. Thus, $\wt(f)= q^9-q^8.$ Clearly, the number of such functions is $q^{10}-q$. This proves $(2).$
\end{proof}

Since the case $f\in V_0\oplus V_1$ has now been dealt with, we turn to the case
$$ f\in (V_0\oplus V_1\oplus V_2)\setminus(V_0\oplus V_1), $$
that is, $f$ has degree $2$ with respect to the grading of $\mathcal F(3,6)$. We consider such functions by distinguishing several cases according to their quadratic parts. The following proposition deals with the case in which the quadratic part of $f$ is a scalar multiple of a single $2\times2$ minor of $\X$.

\begin{proposition}\label{prop:rank1}
Let $f=X_{22}X_{33}-X_{23}X_{32}+\sum_{1\le i,j\le3}\alpha_{ij}X_{ij}+\alpha_0 \in\mathcal F(3,6)$. Set
$$ g:=\alpha_{11}X_{11}+\alpha_{12}X_{12}+\alpha_{13}X_{13}+\alpha_{21}X_{21}+\alpha_{31}X_{31}, \qquad k:=\alpha_0+\alpha_{23}\alpha_{32}-\alpha_{22}\alpha_{33}, $$
so $g$ collects the five entries from the linear part of $f$. Then the weight distribution of $q^{10}$ codewords of this form is given by the following table.
$$
\begin{array}{c|c|c}
\hline
\text{Condition} & \text{Weight} & \text{Number of codewords}\\
\hline
g\neq 0 & q^8(q-1) & q^{10}-q^5\\
g=0,\ k=0 & q^6(q^2-1)(q-1) & q^4\\
g=0,\ k\neq0 & q^6(q^3-q^2+1) & q^5-q^4\\
\hline
\end{array}
$$
\end{proposition}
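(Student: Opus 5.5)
The plan is to complete the product in the four variables of the quadratic part and then count zeros with Lemma~\ref{lem:QLC}. Put $u_1=X_{22}+\alpha_{33}$, $u_2=X_{23}-\alpha_{32}$, $v_1=X_{33}+\alpha_{22}$ and $v_2=X_{32}+\alpha_{23}$. A direct expansion gives
$$u_1v_1-u_2v_2=X_{22}X_{33}-X_{23}X_{32}+\alpha_{22}X_{22}+\alpha_{33}X_{33}+\alpha_{23}X_{23}+\alpha_{32}X_{32}+\alpha_{22}\alpha_{33}-\alpha_{23}\alpha_{32}.$$
Hence
$$f=u_1v_1-u_2v_2+g+k,$$
with $g$ and $k$ exactly as in the statement. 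The substitution $(X_{22},X_{23},X_{32},X_{33})\mapsto(u_1,u_2,v_1,v_2)$ is a translation of $\Fq^9$, so it does not change the number of zeros. Replacing $u_2$ by $-u_2$, which is also a bijection, turns the quadratic part into the form $H$ of Lemma~\ref{lem:hyperbolic} with $m=2$. The remaining five variables $X_{11},X_{12},X_{13},X_{21},X_{31}$ are the $\mathbf z$-variables of Lemma~\ref{lem:QLC}, with $r=5$, $n=9$, and linear form $g$.

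For the case $g\neq0$, part (2) of Lemma~\ref{lem:QLC} gives $|Z(f)|=q^8$, so $\wt(f)=q^9-q^8=q^8(q-1)$. If $g=0$, part (1) gives $|Z(f)|=q^5N_H(-k)$ with $m=2$.

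When $k=0$, $N_H(0)=q^3+q^2-q$, so
$$\wt(f)=q^9-q^8-q^7+q^6=q^6(q^3-q^2-q+1)=q^6(q^2-1)(q-1).$$
When $k\neq0$, $N_H(-k)=q^3-q$, so
$$\wt(f)=q^9-q^8+q^6=q^6(q^3-q^2+1).$$

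For the counts, note that $f$ is determined by the ten free parameters $\alpha_{ij}$ and $\alpha_0$, which gives the total $q^{10}$. Consider the linear map $(\alpha_{ij},\alpha_0)\mapsto\bigl(g,\alpha_{22},\alpha_{23},\alpha_{32},\alpha_{33},k\bigr)$. It is a bijection, because for fixed $\alpha_{22},\alpha_{23},\alpha_{32},\alpha_{33}$ the assignment $\alpha_0\mapsto k$ is a translation.

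Therefore $g=0$ occurs for $q^5$ parameter choices: $g$ is fixed at zero, while the four remaining $\alpha$'s and $k$ are free. Among these, $k=0$ occurs $q^4$ times and $k\neq0$ occurs $q^5-q^4$ times. The case $g\neq0$ accounts for the remaining $q^{10}-q^5$ choices.

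The only delicate step is getting the signs right in the completing-the-product identity, so that the constant comes out as exactly $k=\alpha_0+\alpha_{23}\alpha_{32}-\alpha_{22}\alpha_{33}$. I would verify this by the direct expansion displayed above. Everything else follows immediately from Lemmas~\ref{lem:hyperbolic} and~\ref{lem:QLC}.
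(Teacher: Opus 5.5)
This is essentially the paper's own proof: you complete the product in $X_{22},X_{23},X_{32},X_{33}$, apply Lemma~\ref{lem:QLC} with $m=2$, $r=5$, and count parameters using the fact that $\alpha_0\mapsto k$ is a translation. That parameter map is not linear, since $k$ is quadratic in the $\alpha_{ij}$, but it is a bijection exactly as you argue.

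There is one sign slip, in the very step you flagged as delicate. With your definition $v_2=X_{32}+\alpha_{23}$, the expansion of $u_1v_1-u_2v_2$ contains $-\alpha_{23}X_{23}$ and has constant term $\alpha_{22}\alpha_{33}+\alpha_{23}\alpha_{32}$, so your displayed identity is false as written. Taking $v_2=X_{32}-\alpha_{23}$, as the paper does, makes the displayed identity and the formula for $k$ exactly right. The rest of your argument then goes through unchanged.
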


\begin{proof}
We have
$$ X_{22}X_{33}+\alpha_{22}X_{22}+\alpha_{33}X_{33} =(X_{22}+\alpha_{33})(X_{33}+\alpha_{22}) -\alpha_{22}\alpha_{33}, $$
and
$$ -X_{23}X_{32}+\alpha_{23}X_{23}+\alpha_{32}X_{32} =-(X_{23}-\alpha_{32})(X_{32}-\alpha_{23}) +\alpha_{23}\alpha_{32}. $$
Therefore, under the bijective change of variables
$$ X_{22}'=X_{22}+\alpha_{33},\qquad X_{33}'=X_{33}+\alpha_{22}, $$
$$ X_{23}'=X_{23}-\alpha_{32},\qquad X_{32}'=X_{32}-\alpha_{23}, $$
the polynomial $f$ becomes
$$ H+g+k, \qquad H=X_{22}'X_{33}'-X_{23}'X_{32}'. $$
Here $H$ is a hyperbolic quadratic form in four variables. Since the change of variables is a bijection, it preserves the number of zeros and hence the Hamming weight. We may therefore apply Lemma~\ref{lem:QLC} with $m=2$ and $r=5$.

If $g\neq0$, then $|Z(f)|=q^8, $ giving
$$ \wt(f)=q^9-q^8=q^8(q-1). $$
If $g=0$ and $k=0$, then $|Z(f)|=q^5(q^3+q^2-q),$ and
$$ \wt(f)=q^9-q^5(q^3+q^2-q) =q^6(q-1)(q^2-1). $$
Finally, if $g=0$ and $k\neq0$, then $|Z(f)|=q^5(q^3-q),$ and therefore
$$ \wt(f)=q^9-q^5(q^3-q) =q^6(q^3-q^2+1). $$
To complete the proof, now it only remains to count the codewords in each case. If $g\neq0$, its five coefficients are not all zero, giving $q^5-1$ choices, while the remaining five coefficients are arbitrary. Hence the number of such codewords is
$$ (q^5-1)q^5=q^{10}-q^5. $$
If $g=0$, its five coefficients vanish, leaving $\alpha_{22},\alpha_{23},\alpha_{32},\alpha_{33}$ arbitrary. Thus there are $q^4$ choices for these four coefficients. For each such choice, exactly one value of $\alpha_0$ gives $k=0$. Consequently, there are $q^4$ codewords with $k=0$ and
$$ q^4(q-1)=q^5-q^4 $$
codewords with $k\neq0$.
\end{proof}

The number of rank-one $3\times3$ matrices over $\Fq$ is $(q^3-1)(q^2+q+1)$. By the identification of $V_2$ with $M_3(\Fq)$, each of them gives a copy of the family in Proposition~\ref{prop:rank1} with the same weight distribution.

\begin{proposition}\label{prop:rank2}
Let $f\in V_0\oplus V_1\oplus V_2$ be the function of the form
$$ f=(X_{11}X_{22}-X_{12}X_{21})+(X_{11}X_{33}-X_{13}X_{31})+\sum_{1\le i,j\le3}\alpha_{ij}X_{ij}+\alpha_0 , $$
and let $k:=\alpha_0-\alpha_{11}\alpha_{22}+\alpha_{12}\alpha_{21}+\alpha_{13}\alpha_{31}$. Then the weight distribution of $q^{10}$ codewords of this form is given by the following table.
$$
\begin{array}{c|c|c}
\hline
\text{Condition} & \text{Weight} & \text{Number of codewords}\\
\hline
\alpha_{23}\ne0 \text{ or } \alpha_{32}\ne0 \text{ or } \alpha_{22}\ne\alpha_{33}
& q^8(q-1) & q^{10}-q^7 \\
\alpha_{23}=\alpha_{32}=0,\ \alpha_{22}=\alpha_{33},\ k=0
& q^5(q^3-1)(q-1) & q^6 \\
\alpha_{23}=\alpha_{32}=0,\ \alpha_{22}=\alpha_{33},\ k\ne0
& q^5(q^4-q^3+1) & q^6(q-1) \\
\hline
\end{array}
$$
\end{proposition}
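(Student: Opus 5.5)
The plan is to imitate the proof of Proposition~\ref{prop:rank1}: find a bijective affine change of variables on $\Fq^9$ that turns $f$ into a hyperbolic form plus a linear form in the remaining variables plus a constant, and then apply Lemma~\ref{lem:QLC}. The quadratic part is
$$X_{11}(X_{22}+X_{33})-X_{12}X_{21}-X_{13}X_{31}.$$
This suggests the linear substitution $Y:=X_{22}+X_{33}$, keeping $X_{33}$ as a free coordinate. The map
$$(X_{11},X_{12},\dots,X_{33})\mapsto (X_{11},X_{12},X_{13},X_{21},Y,X_{23},X_{31},X_{32},X_{33})$$
is an invertible linear map of $\Fq^9$, so it preserves the number of zeros.

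In the new coordinates, $\alpha_{22}X_{22}+\alpha_{33}X_{33}=\alpha_{22}Y+(\alpha_{33}-\alpha_{22})X_{33}$. We then complete the three products as in Proposition~\ref{prop:rank1}:
$$X_{11}Y+\alpha_{11}X_{11}+\alpha_{22}Y=(X_{11}+\alpha_{22})(Y+\alpha_{11})-\alpha_{11}\alpha_{22},$$
$$-X_{12}X_{21}+\alpha_{12}X_{12}+\alpha_{21}X_{21}=-(X_{12}-\alpha_{21})(X_{21}-\alpha_{12})+\alpha_{12}\alpha_{21},$$
and similarly for the pair $X_{13},X_{31}$. After the corresponding translations, $f$ becomes $H+\ell+k$. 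Here $H$ is a hyperbolic form in the six variables $X_{11}',Y',X_{12}',X_{21}',X_{13}',X_{31}'$ (signs can be absorbed by negating one variable in each pair). The remaining part is the linear form
$$\ell=\alpha_{23}X_{23}+\alpha_{32}X_{32}+(\alpha_{33}-\alpha_{22})X_{33}$$
in the three leftover variables. The constant term is exactly $k=\alpha_0-\alpha_{11}\alpha_{22}+\alpha_{12}\alpha_{21}+\alpha_{13}\alpha_{31}$, as in the statement. Checking that this constant matches is the only step that needs care.

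We now apply Lemma~\ref{lem:QLC} with $m=3$ and $r=3$.
\begin{enumerate}
\item[(a)] If $\ell\not\equiv0$, that is, $\alpha_{23}\ne0$ or $\alpha_{32}\ne0$ or $\alpha_{22}\ne\alpha_{33}$, then $|Z(f)|=q^8$. Hence $\wt(f)=q^8(q-1)$.
\item[(b)] If $\ell\equiv0$, then $|Z(f)|=q^3N_H(-k)$, with $N_H$ given by Lemma~\ref{lem:hyperbolic} for $m=3$. For $k=0$ this gives $q^3(q^5+q^3-q^2)$, so $\wt(f)=q^5(q^4-q^3-q+1)=q^5(q^3-1)(q-1)$. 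For $k\ne0$ it gives $q^3(q^5-q^2)$, so $\wt(f)=q^5(q^4-q^3+1)$.
\end{enumerate}

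For the counts, the condition $\ell\equiv0$ imposes three independent linear conditions on the $q^{10}$ coefficient tuples $(\alpha_{ij},\alpha_0)$, leaving $q^7$ tuples. This gives $q^{10}-q^7$ codewords in the first row of the table. Among the $q^7$ remaining tuples, the free parameters are $\alpha_{11},\alpha_{12},\alpha_{13},\alpha_{21},\alpha_{31},\alpha_{22}$ and $\alpha_0$. For each choice of the first six, exactly one value of $\alpha_0$ makes $k=0$. This gives $q^6$ codewords with $k=0$ and $q^6(q-1)$ with $k\ne0$. I expect no real obstacle beyond the substitution $Y=X_{22}+X_{33}$, which is what makes the rank-two quadratic part hyperbolic in six variables.
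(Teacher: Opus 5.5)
Your proposal is correct and follows essentially the same route as the paper: the substitution $Y=X_{22}+X_{33}$, completing the square in the three hyperbolic pairs to reach $H+\ell+k$, and then Lemma~\ref{lem:QLC} with $m=r=3$. Your count of $q^7$ coefficient tuples with $\ell\equiv0$ (remembering that $\alpha_0$ is also free) is the right one and matches the table's $q^{10}-q^7$; the paper's written proof slips at this point and states $q^{10}-q^6$.
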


\begin{proof}
If we set $s:=X_{22}+X_{33}$, then the change of variables $(X_{22},X_{33})\longmapsto(s,X_{33})$ is a linear bijection, with inverse $X_{22}=s-X_{33}$. Under this change of variables, we have
$$ X_{11}X_{22}+X_{11}X_{33}=X_{11}s. $$
Hence, the quadratic part of $f$ becomes $ X_{11}s-X_{12}X_{21}-X_{13}X_{31}, $ which is a sum of three hyperbolic pairs in the six variables
$$ X_{11},s,X_{12},X_{21},X_{13},X_{31}. $$
The terms involving $X_{22}$ and $X_{33}$ in the linear part transform as
$$ \alpha_{22}X_{22}+\alpha_{33}X_{33} = \alpha_{22}s+(\alpha_{33}-\alpha_{22})X_{33}. $$
Thus the linear part splits into a component involving the above six variables,
$$ \alpha_{11}X_{11}+\alpha_{22}s +\alpha_{12}X_{12}+\alpha_{21}X_{21} +\alpha_{13}X_{13}+\alpha_{31}X_{31}, $$
and a component involving the remaining three variables,
$$ \ell:=(\alpha_{33}-\alpha_{22})X_{33} +\alpha_{23}X_{23}+\alpha_{32}X_{32}. $$
Completing the square in each of the three hyperbolic pairs, as in Proposition~\ref{prop:rank1}, transforms $f$ into $H_6+\ell+k,$ where $H_6$ is a hyperbolic form of rank $6$ and
$$ k=\alpha_0-\alpha_{11}\alpha_{22} +\alpha_{12}\alpha_{21} +\alpha_{13}\alpha_{31}. $$
Moreover,
$$ \ell=0 \quad\Longleftrightarrow\quad \alpha_{23}=\alpha_{32}=0 \quad\text{and}\quad \alpha_{22}=\alpha_{33}. $$
We now apply Lemma~\ref{lem:QLC} with $m=3$ and $r=3$. If $\ell\neq0$, then $ |Z(f)|=q^8,$ and hence
$$ \wt(f)=q^9-q^8=q^8(q-1). $$
If $\ell=0$ and $k=0$, then $ |Z(f)|=q^3(q^5+q^3-q^2),$ and therefore
$$ \wt(f) =q^9-q^3(q^5+q^3-q^2) =q^5(q^3-1)(q-1). $$
Finally, if $\ell=0$ and $k\neq0$, then $ |Z(f)|=q^3(q^5-q^2),$ so
$$ \wt(f) =q^9-q^3(q^5-q^2) =q^5(q^4-q^3+1). $$
It remains to count the codewords in each of these cases. The condition $\ell=0$ is equivalent to
$$ \alpha_{23}=\alpha_{32}=0, \qquad \alpha_{22}=\alpha_{33}. $$
Thus, the six parameters $\alpha_{11},\alpha_{12},\alpha_{13},\alpha_{21},\alpha_{31},\alpha_{22}(=\alpha_{33})$ are free, giving $q^6$ choices. Hence $\ell\neq0$ occurs for $q^{10}-q^6=q^6(q^4-1)$ choices of the coefficients. For each fixed choice of the six parameters satisfying $\ell=0$, the quantity
$$ k=\alpha_0-\alpha_{11}\alpha_{22} +\alpha_{12}\alpha_{21} +\alpha_{13}\alpha_{31} $$
is an affine linear function of $\alpha_0$ with coefficient $1$. Therefore, exactly one of the $q$ possible values of $\alpha_0$ gives $k=0$, while the remaining $q-1$ values give $k\neq0$. Thus there are $q^6$ codewords with $\ell=0$ and $k=0$, and $q^6(q-1)$ codewords with $\ell=0$ and $k\neq0$. This gives the weight distribution as stated in the table.
\end{proof}

\begin{proposition}\label{prop:rank3}
Let $f\in V_0\oplus V_1\oplus V_2$ be the function of the form
$$ f=(X_{11}X_{22}-X_{12}X_{21})+(X_{11}X_{33}-X_{13}X_{31})+(X_{22}X_{33}-X_{23}X_{32})\\ +\sum_{1\le i,j\le3}\alpha_{ij}X_{ij}+\alpha_0 , $$
and let
$$ b:=\alpha_{22}+\alpha_{33}-\alpha_{11}, \quad d:=-(\alpha_{23}\alpha_{32}-\alpha_{22}\alpha_{33} +\alpha_{12}\alpha_{21} +\alpha_{13}\alpha_{31}+\alpha_0). $$
Let $N_f\in\{0,1,2\}$ be the number of distinct roots of $t^2+bt+d$ in $\Fq$. Then
$$ \wt(f)=q^9-q^8-(N_f-1)q^4. $$
Further, if  $N_0,N_1,N_2$ are as in Lemma~\ref{lem:roots}, then the weight distribution of $q^{10}$ codewords of this form is given by the following table.
$$
\begin{array}{c|c|c}
\hline
N_f & \text{Weight} & \text{Number of codewords} \\
\hline
0 & q^9-q^8+q^4 & q^8N_0=q^9(q-1)/2\\
1 & q^9-q^8 & q^8N_1=q^9\\
2 & q^9-q^8-q^4 & q^8N_2=q^9(q-1)/2\\
\hline
\end{array}
$$
\end{proposition}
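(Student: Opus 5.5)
The plan is to reduce $f$, by translations of variables, to a sum of a hyperbolic form in the six off-diagonal entries and an affine quadratic in the three diagonal entries, and then count zeros of each piece separately.

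\textbf{Step 1: removing the off-diagonal linear terms.} Each off-diagonal pair $(X_{12},X_{21})$, $(X_{13},X_{31})$, $(X_{23},X_{32})$ occurs in the quadratic part only through the hyperbolic term $-X_{ij}X_{ji}$. Exactly as in the proof of Proposition~\ref{prop:rank1}, I would complete the product in each pair. For instance,
$$-X_{12}X_{21}+\alpha_{12}X_{12}+\alpha_{21}X_{21}=-(X_{12}-\alpha_{21})(X_{21}-\alpha_{12})+\alpha_{12}\alpha_{21}.$$
This absorbs all six off-diagonal linear terms into translations, which are bijections preserving the weight. After this, $f$ becomes
$$H_6(\mathbf y)+Q(x_1,x_2,x_3),\qquad Q=x_1x_2+x_1x_3+x_2x_3+\alpha_{11}x_1+\alpha_{22}x_2+\alpha_{33}x_3+c,$$
where $x_i=X_{ii}$, $H_6$ is a sum of three hyperbolic pairs in the shifted off-diagonal variables, and $c=\alpha_0+\alpha_{12}\alpha_{21}+\alpha_{13}\alpha_{31}+\alpha_{23}\alpha_{32}$.

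\textbf{Step 2: splitting the zero count.} I would split $Z(f)$ according to the value $t=H_6(\mathbf y)$. This gives
$$|Z(f)|=\sum_{t}N_H(t)\,\#\{x\in\Fq^3: Q(x)=-t\}.$$
By Lemma~\ref{lem:hyperbolic} with $m=3$, $N_H(t)=q^5-q^2+q^3[t=0]$. Summing over $t$ yields
$$|Z(f)|=(q^5-q^2)q^3+q^3M,\qquad M:=\#\{x\in\Fq^3:Q(x)=0\}.$$
Hence $\wt(f)=q^9-q^8+q^5-q^3M$, and the claimed formula $\wt(f)=q^9-q^8-(N_f-1)q^4$ is equivalent to
$$M=q^2+(N_f-1)q.$$

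\textbf{Step 3: counting $M$ (the main step).} For fixed $(x_2,x_3)$, the equation $Q=0$ is linear in $x_1$, with leading coefficient $x_2+x_3+\alpha_{11}$. There are two cases.
\begin{itemize}
\item If this coefficient is nonzero, there is exactly one $x_1$. This happens for $q^2-q$ pairs $(x_2,x_3)$.
\item If it is zero, put $x_3=-x_2-\alpha_{11}$. Then $Q=0$ reduces to a monic quadratic equation in $x_2$ (after multiplying by $-1$). Each root gives $q$ solutions, since $x_1$ is then free.
\end{itemize}
Altogether $M=q^2-q+qN'$, where $N'$ is the number of distinct roots of that quadratic.

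The obstacle is to check that this quadratic, after a translation $x_2\mapsto \pm t+\text{const}$ (which preserves the number of distinct roots), is $t^2+bt+d$ with exactly the $b$ and $d$ of the statement. This is a sign and bookkeeping computation, so that $N'=N_f$. I would do it by substituting directly and comparing coefficients. If the constant term differs from the stated $d$ by an expression that is a perfect-square shift, I would verify that the discriminant agrees, since only the root count matters.

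\textbf{Step 4: the distribution.} Fix the eight parameters other than $\alpha_{11}$ and $\alpha_0$. Then $b$ is affine in $\alpha_{11}$ with coefficient $-1$. For fixed $\alpha_{11}$, $d$ is affine in $\alpha_0$ with coefficient $-1$. So $(\alpha_{11},\alpha_0)\mapsto(b,d)$ is a bijection of $\Fq^2$. By Lemma~\ref{lem:roots}, the number of codewords with $N_f=i$ is therefore $q^8N_i$, which gives the table.
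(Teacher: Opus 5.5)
Your proposal is correct, and it takes a genuinely different route from the paper.

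\textbf{The paper's route.} The paper fixes the first row $(X_{11},X_{12},X_{13})=(a_1,a_2,a_3)$. Unless $(a_2,a_3)=(\alpha_{21},\alpha_{31})$, the specialized function has a nonzero coefficient on $X_{21}$ or $X_{31}$. These variables do not occur in its quadratic part, so it has $q^5$ zeros. For the remaining $q$ choices $(t,\alpha_{21},\alpha_{31})$, the function becomes the rank-four hyperbolic form $X_{22}X_{33}-X_{23}X_{32}$ plus the constant $-(t^2+bt+d)$. So there the quadratic in $t$ records the value of $X_{11}$.

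\textbf{Your route.} You exploit that, for the representative $C=I_3$, the off-diagonal and diagonal entries decouple completely: $f=H_6(\mathbf y)+Q(x_1,x_2,x_3)$. The weight is then governed by the single number $M$ of zeros of the three-variable quadratic $Q$. The one-variable quadratic appears as the restriction of $Q$ to the line $x_2+x_3+\alpha_{11}=0$, on which $Q$ no longer depends on $x_1$.

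\textbf{Closing your bookkeeping step.} The step you left open closes exactly. Your monic quadratic is
$x_2^2+(\alpha_{11}+\alpha_{33}-\alpha_{22})x_2+\alpha_{11}\alpha_{33}-c$.
The substitution $x_2=-t-\alpha_{33}$ turns it into
$t^2+(\alpha_{22}+\alpha_{33}-\alpha_{11})t+\alpha_{22}\alpha_{33}-c=t^2+bt+d$,
because $d=\alpha_{22}\alpha_{33}-c$. This identity holds in every characteristic, so your discriminant fallback is not needed. It also would not be a valid root-count criterion when $q$ is even.

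\textbf{Counting and comparison.} Your count in Step~4, via the bijection $(\alpha_{11},\alpha_0)\mapsto(b,d)$ with the other eight coefficients fixed, is slightly cleaner than the paper's fibre argument. Your decomposition buys structural transparency and a direct parallel with Proposition~\ref{prop:rank2}: a rank-six hyperbolic part plus a remainder in the diagonal variables. The paper's row specialization buys uniformity with the cubic case, where rows are also fixed. It also fits the inductive strategy on $\ell$ suggested in Remark~\ref{rem:general}.
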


\begin{proof}
Fix the first row of $X$, say $X_{11}=a_1, X_{12}=a_2, X_{13}=a_3,$ where $\mathbf a=(a_1,a_2,a_3)\in\Fq^3$. A direct substitution gives
\begin{align*}
S_{\mathbf a}(f)
={}&X_{22}X_{33}-X_{23}X_{32}
+(-a_2+\alpha_{21})X_{21}
+(a_1+\alpha_{22})X_{22}\\
&+(-a_3+\alpha_{31})X_{31}
+(a_1+\alpha_{33})X_{33}
+\alpha_{23}X_{23}+\alpha_{32}X_{32}
+\alpha_{\mathbf a}(f),
\end{align*}
where
$$ \alpha_{\mathbf a}(f) =a_1\alpha_{11}+a_2\alpha_{12}+a_3\alpha_{13}+\alpha_0. $$
Suppose first that $a_2\neq\alpha_{21}$ or $a_3\neq\alpha_{31}$. Then at least one of $X_{21}$ and $X_{31}$ occurs with a nonzero coefficient in $S_{\mathbf a}(f)$. Consequently,
$$ |Z(S_{\mathbf a}(f))|=q^5. $$
There are $q^3-q$ choices of $\mathbf a$ satisfying this condition. It remains to consider the case
$$ a_2=\alpha_{21}, a_3=\alpha_{31}. $$
Writing $a_1=t$, where $t\in\Fq$ is arbitrary, the function $S_{\mathbf a}(f)$ reduces to a function $h_t$ in the four variables $X_{22},X_{23},X_{32},X_{33}$, and
$$ |Z(S_{\mathbf a}(f))|=q^2|Z(h_t)|. $$
Completing the square as before, we can write $h_t=H+k_t,$ where $H$ is a hyperbolic form of rank $4$ and $ k_t=-(t^2+bt+d),$ with $b$ and $d$ as defined in the statement. By Lemma~\ref{lem:hyperbolic},
$$
|Z(h_t)|=
\begin{cases}
q^3+q^2-q,&\text{if }t^2+bt+d=0,\\
q^3-q,&\text{otherwise}.
\end{cases}
$$
If $N_f$ denotes the number of roots of $t^2+bt+d$ in $\Fq$, then
\begin{align*}
\sum_{t\in\Fq}|Z(h_t)|
&=N_f(q^3+q^2-q)+(q-N_f)(q^3-q)\\
&=q^4+(N_f-1)q^2.
\end{align*}
Therefore, $|Z(f)| = q^8+(N_f-1)q^4$ and consequently,
$$ \wt(f)=q^9-q^8-(N_f-1)q^4. $$
Thus, the three possible values $N_f=0,1,2$ give the three weights stated in the proposition.

It remains to determine the number of codewords corresponding to each value of $N_f$. Fix $(\alpha_{21},\alpha_{22},\alpha_{23},\alpha_{31}).$ For these fixed values, the map
$$ (\alpha_{11},\alpha_{12},\alpha_{13},\alpha_{32},\alpha_{33},\alpha_0) \longmapsto (b,d) $$
is an $\Fq$-linear surjection from $\Fq^6$ onto $\Fq^2$. Hence, each fiber has size $q^4$. Since the four fixed parameters range independently over $\Fq$, the map from the full set of $q^{10}$ coefficient tuples to $\Fq^2$ has fibers of size $q^8.$ By Lemma~\ref{lem:roots}, there are $N_i$ pairs $(b,d)\in\Fq^2$ for which $t^2+bt+d$ has exactly $i$ distinct roots in $\Fq$. Consequently, the number of codewords with $N_f=i$ is
$$ q^8N_i,\qquad i=0,1,2. $$
Using the values of $N_0,N_1,N_2$ from Lemma~\ref{lem:roots} gives the  weight distribution as listed in the table.

\end{proof}

\subsection{Hamming Weights of Functions With Cubic Part}
\label{sec:cubic-present}

Now, we consider the case $f\in \mathcal F(3,6)\setminus (V_0\oplus V_1\oplus V_2).$ Before taking a general function from this set, we are going to consider functions of the form
$$ f=\beta\det X+\sum_{1\le i,j\le3}\alpha_{ij}X_{ij}+\alpha_0, \qquad \beta\ne0. $$
Since multiplication by a nonzero scalar does not change the Hamming weight, we may assume, without loss of generality, that $\beta=1$. Thus, we may consider
$$ f=\det X+\ell+\alpha_0, $$
where $\ell=\sum_{1\le i,j\le3}\alpha_{ij}X_{ij}\in V_1$. We denote the matrices at which the function $f$ is evaluated by $A=(a_{ij})\in M_3(\Fq)$, and denote the rows of $A$ by $r_1,r_2,r_3\in\Fq^3$.

Consider the subgroup of $\GL_3(\Fq)\times\GL_3(\Fq)$ consisting of all pairs  $(P,Q)$ satisfying $\det P=\det Q$. Under the transformation
$$ X\longmapsto PXQ^{-1}, $$
the determinant is preserved. This group also acts on the coefficient matrix of the linear part $\ell$. By the usual rank classification of matrices under left and right multiplication, the coefficient matrix can, up to this action, be reduced to one of the four diagonal matrices
$$ 0,\qquad E_{11},\qquad E_{11}+E_{22},\qquad I_3, $$
according to whether its rank is $0,1,2,$ or $3$. Consequently, it suffices to consider the four cases
$$ \ell=0,\qquad \ell=X_{11},\qquad \ell=X_{11}+X_{22},\qquad \ell=X_{11}+X_{22}+X_{33}. $$
Now we determine the weight of the function $f=\det (X) +\ell +\alpha_0$ depending on the four possibilities of $\ell$ and of $\alpha_0$ being zero or nonzero.

\begin{proposition}\label{prop:cubic-rank0}
In the case $\ell=0$, the Hamming weight of $f=\det(X) +\alpha_0$ is given by
$$
\wt(f)=\begin{cases}
    |\GL_3(\Fq)|, \qquad\quad\text{ if }\alpha_0=0,\\
    q^9-|\SL_3(\Fq)|\qquad \text{ otherwise }.
\end{cases}
$$
\end{proposition}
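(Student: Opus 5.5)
The plan is to read off both weights directly from Lemma~\ref{lem:prescribed-det} with $n=3$, since $f=\det X+\alpha_0$ vanishes at $A\in M_3(\Fq)$ exactly when $\det A=-\alpha_0$. Thus
$$ \wt(f)=q^9-\bigl|\{A\in M_3(\Fq):\det A=-\alpha_0\}\bigr|, $$
and it only remains to evaluate this count in the two cases.

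If $\alpha_0=0$, the zero set of $f$ is the set of singular matrices. By Lemma~\ref{lem:prescribed-det} it has size $q^9-|\GL_3(\Fq)|$, so $\wt(f)=|\GL_3(\Fq)|$. Equivalently, $f(A)\neq0$ exactly when $A$ is invertible, which gives the weight directly.

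If $\alpha_0\neq0$, then $c=-\alpha_0$ is nonzero. Lemma~\ref{lem:prescribed-det} then says the number of matrices with $\det A=c$ equals $|\GL_3(\Fq)|/(q-1)=|\SL_3(\Fq)|$, independently of $c$. Hence $\wt(f)=q^9-|\SL_3(\Fq)|$.

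There is no real obstacle here. The only points needing care are the sign of $-\alpha_0$, which is irrelevant because the count is the same for every nonzero $c$, and noting that the case split in the statement matches the split $c=0$ versus $c\neq0$ in Lemma~\ref{lem:prescribed-det}. If explicit values are wanted later, one can substitute $|\GL_3(\Fq)|=(q^3-1)(q^3-q)(q^3-q^2)$ and $|\SL_3(\Fq)|=q^3(q^3-1)(q^2-1)$.
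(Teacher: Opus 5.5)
Your proposal is correct and follows essentially the same route as the paper: you write $\wt(f)=q^9-\lvert\{A:\det A=-\alpha_0\}\rvert$ and evaluate this count in the cases $\alpha_0=0$ and $\alpha_0\neq0$ using Lemma~\ref{lem:prescribed-det}. Your closed forms $|\GL_3(\Fq)|=(q^3-1)(q^3-q)(q^3-q^2)$ and $|\SL_3(\Fq)|=q^3(q^3-1)(q^2-1)$ are also correct.
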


\begin{proof}
The case $\alpha_0=0$ is trivial. In the case when $\alpha_0\neq0$, the Hamming weight $\wt(f)$ simply counts $A\in M_3(\Fq)$ such that $\det(A)\neq -\alpha_0$. Using Lemma~\ref{lem:prescribed-det} we get that $\wt(f)= q^9-|\SL_3(\Fq)|.$
\end{proof}

For $S\subseteq\{1,2\}$ we define $\ell_S(A)=\sum_{i\in S}a_{ii}$. Thus,
$$ \ell_\emptyset=0,\quad \ell_{\{1\}}=a_{11} \quad \text{ and } \ell_{\{1,2\}}=a_{11}+a_{22}. $$
Note that $\ell_S$ depends only on the rows $r_1$ and $r_2$, and never on $r_3$. The next lemma will be useful in counting the weight of $f=\det(X) +\ell+\alpha_0$ in cases when $\ell\neq 0$.

\begin{lemma}\label{lem:ES}
For $\alpha_0\in\Fq$, let
$$ \nu_S(\alpha_0)=\lvert\{(r_1,r_2)\in\Fq^3\times \Fq^3:r_1\times r_2=0,\ \ell_S(r_1,r_2)=-\alpha_0\}|. $$
Then
$$
\nu_{\{1\}}(\alpha_0)=\begin{cases}
    2q^3-q,\quad \text{ if }\alpha_0=0\\
    q^3\qquad\quad\text{ otherwise }.
\end{cases}
$$
and
$$
\nu_{\{1,2\}}(\alpha_0)=\begin{cases}
    q^3+q^2-q\quad\text{ if }\alpha_0=0\\
    q^3+q^2 \qquad\quad\text{ otherwise }.
\end{cases}
$$
\end{lemma}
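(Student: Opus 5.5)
The plan is to count pairs $(r_1,r_2)$ of linearly dependent vectors directly, splitting by whether $r_1=0$. The condition $r_1\times r_2=0$ holds exactly when $r_1,r_2$ are linearly dependent. Such pairs are of two kinds: (i) $r_1=0$ and $r_2$ arbitrary, which gives $q^3$ pairs; (ii) $r_1\neq0$ and $r_2=\lambda r_1$ for a unique $\lambda\in\Fq$, which gives $(q^3-1)q$ pairs. As a check, $q^3+q^4-q=q^6-(q^3-1)(q^3-q)$ is false in general, so instead I will rely on the direct parametrization rather than on Lemma~\ref{lem:cross-fibre}(1). For each kind I will count how many pairs satisfy $\ell_S=-\alpha_0$.

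For $S=\{1\}$ the condition is $a_{11}=-\alpha_0$, and it depends only on $r_1$.
\begin{itemize}
\item In kind (i), $a_{11}=0$, so we get $q^3$ pairs when $\alpha_0=0$ and none otherwise.
\item In kind (ii), the condition reads $r_1\neq0$ with $a_{11}=-\alpha_0$, and $\lambda$ is free. If $\alpha_0=0$ there are $q^2-1$ choices of $r_1$; if $\alpha_0\ne0$ there are $q^2$. Multiplying by $q$ gives $q^3-q$ and $q^3$ respectively.
\end{itemize}
Summing, $\nu_{\{1\}}(0)=2q^3-q$ and $\nu_{\{1\}}(\alpha_0)=q^3$ for $\alpha_0\neq0$, which matches the statement.

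For $S=\{1,2\}$ the condition is $a_{11}+a_{22}=-\alpha_0$, where $a_{22}$ is the second coordinate of $r_2$.
\begin{itemize}
\item In kind (i), $r_1=0$ and the condition becomes $a_{22}=-\alpha_0$, which gives $q^2$ choices of $r_2$ for every $\alpha_0$.
\item In kind (ii), $r_2=\lambda r_1$ with $r_1=(x,y,z)\neq0$, and the condition becomes $x+\lambda y=-\alpha_0$. I will count triples $(x,y,z,\lambda)$ with $(x,y,z)\ne0$. For each fixed $(\lambda,y,z)$, $x$ is determined, so all of $\Fq^4$ contributes $q^3$ solutions. I then subtract the solution $x=y=z=0$, which occurs for every $\lambda$ exactly when $\alpha_0=0$. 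This gives $q^3-q$ if $\alpha_0=0$ and $q^3$ otherwise.
\end{itemize}
Adding the $q^2$ from kind (i) gives $q^3+q^2-q$ and $q^3+q^2$, as claimed.

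The only delicate point is keeping the two kinds disjoint: the pair $(0,0)$ must be counted only in kind (i), which is why kind (ii) requires $r_1\ne0$. It is also exactly why the correction $-q$ appears when $\alpha_0=0$. Everything else is direct linear counting, so I expect no real obstacle beyond this bookkeeping.
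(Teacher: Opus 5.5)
Your proof is correct and is essentially the paper's argument. You use the same split into $r_1=0$ versus $r_1\neq0$, $r_2=\lambda r_1$. For $S=\{1,2\}$ you count all solutions of $x+\lambda y=-\alpha_0$ in $\Fq^4$ and subtract the $r_1=0$ solutions, which exist only when $\alpha_0=0$. This is a slightly slicker version of the paper's split on whether $a_{12}=0$.

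One slip: your ``check'' is wrong. In fact $q^3+q(q^3-1)=q^4+q^3-q=q^6-(q^3-1)(q^3-q)$ does hold, so your parametrization agrees with Lemma~\ref{lem:cross-fibre}(1), and that remark should be deleted.
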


\begin{proof}
By Lemma~\ref{lem:cross-fibre} (1), the pairs $(r_1,r_2)$ with $r_1 \times r_2 = 0$ are exactly the pairs for which $r_1$ and $r_2$ are linearly dependent. Thus either $r_1 = 0$ and $r_2$ is arbitrary, or $r_1 \neq 0$ and $r_2 = \lambda r_1$ for a unique $\lambda \in \mathbb{F}_q$. Throughout, write $r_1 = (a_{11}, a_{12}, a_{13})$.
 
First let $S = \{1\}$, so that $\ell_S(r_1,r_2) = a_{11}$ depends only on $r_1$. If $r_1 = 0$, then $a_{11} = 0$, so the condition $\ell_S(r_1,r_2) = -\alpha_0$ holds only when $\alpha_0 = 0$, and in that case all $q^3$ choices of $r_2$ occur. If $r_1 \neq 0$, the condition $a_{11} = -\alpha_0$ depends only on $r_1$, and once such an $r_1$ has been chosen it holds for all $q$ choices of $r_2 = \lambda r_1$.  There are $q^2 - 1$ such $r_1$ when $\alpha_0 = 0$ and $q^2$ such $r_1$ when $\alpha_0 \neq 0$.  Hence this case contributes $q(q^2-1) = q^3 - q$ when $\alpha_0 = 0$ and $q^3$ when $\alpha_0 \neq 0$.  Adding the two cases gives
$$\nu_{\{1\}}(0) = q^3 + (q^3 - q) = 2q^3 - q,
\qquad
\nu_{\{1\}}(\alpha_0) = q^3 \quad (\alpha_0 \neq 0).$$
Now let $S = \{1,2\}$, so that $\ell_S(r_1,r_2) = a_{11} + a_{22}$. If $r_1 = 0$, then $a_{11} = 0$ and we need the second coordinate
$a_{22}$ of $r_2$ to equal $-\alpha_0$, while the other two coordinates
of $r_2$ are free.  This gives $q^2$ choices of $r_2$, for every
$\alpha_0 \in \mathbb{F}_q$. If $r_1 \neq 0$ and $r_2 = \lambda r_1$, then $a_{22} = \lambda a_{12}$, and hence the condition becomes $a_{11} + \lambda a_{12} = -\alpha_0$.  We distinguish two subcases.
\begin{itemize}
\item If $a_{12} \neq 0$, then exactly one value of $\lambda$ works, for every choice of $a_{11}, a_{13}$ and every $\alpha_0$.  There are $q(q-1)q = q^2(q-1)$ such $r_1$, and hence $q^2(q-1)$ pairs.
\item If $a_{12} = 0$, the condition becomes $a_{11} = -\alpha_0$, which is independent of $\lambda$, so all $q$ values of $\lambda$ work.  If $\alpha_0 \neq 0$, then $a_{11} = -\alpha_0 \neq 0$, so $r_1 \neq 0$ holds automatically and $a_{13}$ is free; this gives $q$ choices of $r_1$ and therefore $q \cdot q = q^2$ pairs.  If $\alpha_0 = 0$, then $a_{11} = a_{12} = 0$, so $r_1 \neq 0$ forces $a_{13} \neq 0$; this gives $q - 1$ choices of $r_1$ and therefore $q(q-1) = q^2 - q$ pairs.
\end{itemize}
 
Adding the three contributions, we get, for $\alpha_0 \neq 0$,
$$\nu_{\{1,2\}}(\alpha_0) = q^2 + q^2(q-1) + q^2 = q^3 + q^2,$$
and, for $\alpha_0 = 0$,
$$\nu_{\{1,2\}}(0) = q^2 + q^2(q-1) + (q^2 - q) = q^3 + q^2 - q.$$

\end{proof}

In the next proposition, we consider the case when $f=\det(X) + \ell +\alpha_0$, where $\ell=X_{11}$, and determine the Hamming weight of the code.

\begin{proposition}\label{prop:cubic-rank1}
If $f=\det(X) +X_{11}+\alpha_0$, then the Hamming weight of the function $f$ is given by
$$
\wt(f)=
\begin{cases}
q^9-q^8-(q^3-q)(q^3-q^2), & \alpha_0=0,\\
q^9-q^8+q^5-q^3, & \alpha_0\ne0.
\end{cases}
$$
In particular, the weight is the same for all $q-1$ nonzero values of $\alpha_0$.
\end{proposition}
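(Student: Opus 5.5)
The plan is to count the zeros of $f=\det X+X_{11}+\alpha_0$ row by row, using the triple-product description of the determinant from Lemma~\ref{lem:triple-product}. Write an evaluation point $A\in M_3(\Fq)$ through its rows $r_1,r_2,r_3$ and put $w:=r_1\times r_2$. Then
$$f(A)=r_3\cdot w+a_{11}+\alpha_0,$$
and $a_{11}=\ell_{\{1\}}(r_1,r_2)$ depends only on $r_1$. So I first fix the pair $(r_1,r_2)$ and count the admissible $r_3$.

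There are two cases for a fixed pair.
\begin{enumerate}
\item[(a)] If $w\neq0$, the map $r_3\mapsto r_3\cdot w$ is a nonzero linear functional on $\Fq^3$. The equation $r_3\cdot w=-(a_{11}+\alpha_0)$ then has exactly $q^2$ solutions $r_3$, whatever the right-hand side is.
\item[(b)] If $w=0$, the equation becomes $a_{11}+\alpha_0=0$. This gives all $q^3$ choices of $r_3$ when it holds and none otherwise.
\end{enumerate}

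Next I count the pairs in each case. By Lemma~\ref{lem:cross-fibre}(1), the number of pairs with $w\neq0$ is $(q^3-1)(q^3-q)$. The number of pairs with $w=0$ and $a_{11}=-\alpha_0$ is exactly $\nu_{\{1\}}(\alpha_0)$ from Lemma~\ref{lem:ES}. Hence
$$|Z(f)|=q^2(q^3-1)(q^3-q)+q^3\,\nu_{\{1\}}(\alpha_0).$$

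It remains to substitute the values from Lemma~\ref{lem:ES}. Note first that $q^2(q^3-1)(q^3-q)=q^8-q^6-q^5+q^3$.
\begin{enumerate}
\item[(1)] For $\alpha_0=0$ we have $\nu_{\{1\}}(0)=2q^3-q$. This gives $|Z(f)|=q^8+q^6-q^5-q^4+q^3$. Hence $\wt(f)=q^9-q^8-(q^6-q^5-q^4+q^3)$, and $q^6-q^5-q^4+q^3=(q^3-q)(q^3-q^2)$.
\item[(2)] For $\alpha_0\neq0$ we have $\nu_{\{1\}}(\alpha_0)=q^3$. This gives $|Z(f)|=q^8-q^5+q^3$, and therefore $\wt(f)=q^9-q^8+q^5-q^3$. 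This value does not depend on which nonzero $\alpha_0$ is chosen.
\end{enumerate}

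There is no serious obstacle here, since the delicate counting is already contained in Lemma~\ref{lem:ES}. The only points needing care are two. First, in case (a) the count $q^2$ must hold uniformly, even when $a_{11}+\alpha_0=0$, because the equation is a nonzero linear condition on $r_3$. Second, case (b) must be handled separately, since there all $q^3$ vectors $r_3$ contribute.
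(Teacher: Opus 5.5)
Your proposal is correct and follows the paper's proof essentially verbatim. Like the paper, you fix $(r_1,r_2)$, split according to whether $r_1\times r_2$ vanishes, and combine Lemma~\ref{lem:cross-fibre}(1) with Lemma~\ref{lem:ES} to obtain $|Z(f)|=q^2(q^3-1)(q^3-q)+q^3\nu_{\{1\}}(\alpha_0)$; your explicit expansion of the final arithmetic, which the paper leaves implicit, checks out in both cases.
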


\begin{proof}
By Lemma~\ref{lem:triple-product}, we have $f(A)=0$ if and only if $r_3\cdot(r_1\times r_2)=-a_{11}-\alpha_0$. Fix $(r_1,r_2)$, and let $\mathbf c=r_1\times r_2$. If $\mathbf c\ne0$, then exactly $q^2$ values of $r_3$ satisfy the equation $r_3\cdot \mathbf c= -a_{11}-\alpha_0$. On the other hand, if $\mathbf c=0$, then either $(i)$ all $q^3$ choices of $r_3$ satisfy the equation $r_3\cdot \mathbf c= -a_{11}-\alpha_0$, provided $a_{11}+\alpha_0=0$; or $(ii)$ none of the choices of $r_3$ satisfies $r_3\cdot \mathbf c= -a_{11}-\alpha_0$. Using Lemma~\ref{lem:cross-fibre} (1) and Lemma~\ref{lem:ES}, we obtain
$$ \vert\{f=0\}\rvert=q^2(q^3-1)(q^3-q)+q^3\nu_{\{1\}}(\alpha_0). $$
Now substituting the value of $\nu_{{1}}(\alpha_0)$ from Lemma \ref{lem:ES}, we get
$$
\wt(f)=
\begin{cases}
q^9-q^8-(q^3-q)(q^3-q^2), & \alpha_0=0,\\
q^9-q^8+q^5-q^3, & \alpha_0\ne0.
\end{cases}
$$
This proves the proposition.

\end{proof}

\begin{proposition}\label{prop:cubic-rank2}
If $f=\det(X) +X_{11}+ X_{22}+\alpha_0$, then the Hamming weight of the function $f$ is given by
$$
\wt(f)=
\begin{cases}
q^9-q^8+q^4-q^3, & \alpha_0=0,\\
q^9-q^8-q^3, & \alpha_0\ne0.
\end{cases}
$$
In particular, the weight is the same for all $q-1$ nonzero values of $\alpha_0$.
\end{proposition}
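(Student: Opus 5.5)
We follow the proof of Proposition~\ref{prop:cubic-rank1} line by line, replacing $\ell_{\{1\}}$ by $\ell_{\{1,2\}}$. Write $A\in M_3(\Fq)$ with rows $r_1,r_2,r_3$. By Lemma~\ref{lem:triple-product}, $f(A)=0$ if and only if
$$ r_3\cdot(r_1\times r_2)=-(a_{11}+a_{22})-\alpha_0 . $$
The key point is that $a_{11}+a_{22}=\ell_{\{1,2\}}(r_1,r_2)$ depends only on $(r_1,r_2)$. So we fix $(r_1,r_2)$, put $\mathbf c=r_1\times r_2$, and count the admissible $r_3$.

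For fixed $(r_1,r_2)$ there are two cases.
\begin{enumerate}
\item[(a)] If $\mathbf c\neq0$, then $r_3\mapsto r_3\cdot\mathbf c$ is a nonzero linear functional. The equation is therefore an affine hyperplane condition on $r_3$, and exactly $q^2$ vectors $r_3$ satisfy it, whatever the right-hand side. By Lemma~\ref{lem:cross-fibre}~(1), there are $(q^3-1)(q^3-q)$ such pairs.
\item[(b)] If $\mathbf c=0$, then either all $q^3$ vectors $r_3$ satisfy the equation or none does, according as $a_{11}+a_{22}=-\alpha_0$ or not. The number of pairs with $\mathbf c=0$ and $a_{11}+a_{22}=-\alpha_0$ is $\nu_{\{1,2\}}(\alpha_0)$, by definition.
\end{enumerate}
Adding the two contributions gives
$$ \lvert\{f=0\}\rvert=q^2(q^3-1)(q^3-q)+q^3\,\nu_{\{1,2\}}(\alpha_0). $$

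It remains to substitute Lemma~\ref{lem:ES}. First expand $q^2(q^3-1)(q^3-q)=q^8-q^6-q^5+q^3$.
\begin{enumerate}
\item[(1)] For $\alpha_0=0$ we have $q^3\nu_{\{1,2\}}(0)=q^6+q^5-q^4$. The zero count is then $q^8-q^4+q^3$, so $\wt(f)=q^9-q^8+q^4-q^3$.
\item[(2)] For $\alpha_0\neq0$ we have $q^3\nu_{\{1,2\}}(\alpha_0)=q^6+q^5$. The zero count is then $q^8+q^3$, so $\wt(f)=q^9-q^8-q^3$.
\end{enumerate}
Both values agree with the statement. Since $\nu_{\{1,2\}}(\alpha_0)$ does not depend on which nonzero $\alpha_0$ is chosen, the weight is the same for all $q-1$ nonzero values of $\alpha_0$.

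There is no real obstacle here: the substantive counting, namely the case split on $a_{12}$ in the dependent-pair enumeration, is already done in Lemma~\ref{lem:ES}. The step needing care is the reduction in case~(b). It requires the right-hand side $-(a_{11}+a_{22})-\alpha_0$ to be constant once $(r_1,r_2)$ is fixed, which holds because $S=\{1,2\}$ involves no entry of the third row. This is why the normal form $X_{11}+X_{22}$ is used rather than a rank-two form involving row three.
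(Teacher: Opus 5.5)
Your proof is correct and follows the paper's argument: the paper also fixes $(r_1,r_2)$, uses Lemma~\ref{lem:triple-product} to get $\lvert\{f=0\}\rvert=q^2(q^3-1)(q^3-q)+q^3\nu_{\{1,2\}}(\alpha_0)$, and then substitutes the values from Lemma~\ref{lem:ES}. Your expansion $q^8-q^6-q^5+q^3$ and the resulting zero counts $q^8-q^4+q^3$ and $q^8+q^3$ are right in both cases.
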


\begin{proof}
Using a similar argument to that in Proposition~\ref{prop:cubic-rank1}, with $\ell_{{1,2}}$ in place of $a_{11}$, and simplifying, we get
$$ \lvert{A\in M_3(\Fq): f(A)=0}\rvert=q^2(q^3-1)(q^3-q)+q^3\nu_{\{1,2\}}(\alpha_0). $$
Now, using Lemma~\ref{lem:ES} and substituting the value of $\nu_{\{1,2\}}(\alpha_0)$, we obtain
$$
\wt(f)=
\begin{cases}
q^9-q^8+q^4-q^3, & \alpha_0=0,\\
q^9-q^8-q^3, & \alpha_0\ne0.
\end{cases}
$$
This proves the proposition.

\end{proof}

In the next proposition, we consider the final case, i.e., when $f=\det(X) +\ell+\alpha_0$, where $\ell= X_{11} +X_{22} + X_{33}$ and $\alpha_0\in \Fq$, and determine its Hamming weight. Note that $\ell$ is $\operatorname{tr}(X)$.

\begin{proposition}\label{prop:cubic-rank3}
If $f=\det(X) +X_{11}+ X_{22}+ X_{33}$, then the Hamming weight of the function $f$ is given by
$$
\wt(f)=
\begin{cases}
q^9-q^8-q^4-q^3, & q\text{ odd},\\
q^9-q^8-q^3, & q\text{ even}.
\end{cases}
$$
Further, for $\alpha_0\ne0$, let $N_f\in{0,1,2}$ be the number of roots in $\Fq$ of the polynomial $t^2+\alpha_0t-1$. Then
$$ \wt(f+\alpha_0)=q^9-q^8+(1-N_f)q^4-q^3. $$
\end{proposition}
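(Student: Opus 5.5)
The plan is to reuse the row decomposition from the proofs of Propositions~\ref{prop:cubic-rank1} and \ref{prop:cubic-rank2}. The difference here is that the linear part now involves $a_{33}$, which lies in $r_3$, so it must be absorbed into the linear equation in $r_3$. Write $e_3=(0,0,1)$ and treat all $\alpha_0\in\Fq$ at once. By Lemma~\ref{lem:triple-product},
$$ f(A)+\alpha_0=r_3\cdot\bigl(r_1\times r_2+e_3\bigr)+a_{11}+a_{22}+\alpha_0 , $$
and $a_{11}+a_{22}$ depends only on $(r_1,r_2)$.

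\textbf{Step 1: count zeros fibrewise over $(r_1,r_2)$.} For fixed $(r_1,r_2)$, set $\mathbf c=r_1\times r_2+e_3$.
\begin{enumerate}
\item[(i)] If $\mathbf c\neq 0$, the equation in $r_3$ is a nontrivial affine linear equation, so it has exactly $q^2$ solutions.
\item[(ii)] If $\mathbf c=0$, that is $r_1\times r_2=-e_3$, it has $q^3$ solutions when $a_{11}+a_{22}=-\alpha_0$ and none otherwise.
\end{enumerate}
Let $M=\#\{(r_1,r_2):r_1\times r_2=-e_3\}$, which equals $q(q^2-1)$ by Lemma~\ref{lem:cross-fibre}(2). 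Let $M_0$ be the number of those pairs that also satisfy $a_{11}+a_{22}=-\alpha_0$. Then
$$ |Z(f+\alpha_0)|=q^2(q^6-M)+q^3M_0 . $$

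\textbf{Step 2: compute $M_0$.} This is the key step. By the proof of Lemma~\ref{lem:cross-fibre}(2), $r_1\times r_2=-e_3$ forces $r_1,r_2\in e_3^\perp$, so $a_{13}=a_{23}=0$. The remaining condition is then
$$ a_{11}a_{22}-a_{12}a_{21}=-1 . $$
Hence $M_0$ is the number of $2\times2$ matrices with trace $-\alpha_0$ and determinant $-1$, that is, with characteristic polynomial $t^2+\alpha_0t-1$.

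To count these, put $a_{11}=x$ and $a_{22}=-\alpha_0-x$. The condition becomes $a_{12}a_{21}=-x^2-\alpha_0x+1$. By the case $m=1$ of Lemma~\ref{lem:hyperbolic}, this has $2q-1$ solutions $(a_{12},a_{21})$ if $x^2+\alpha_0x-1=0$, and $q-1$ solutions otherwise. Summing over $x$ gives
$$ M_0=q(q-1)+qN_f=q^2+(N_f-1)q , $$
where $N_f$ is the number of roots of $t^2+\alpha_0t-1$ in $\Fq$.

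\textbf{Step 3: substitute.} We get
$$ |Z|=q^8-q^3(q^2-1)+q^3\bigl(q^2+(N_f-1)q\bigr)=q^8+q^3+(N_f-1)q^4 , $$
and therefore
$$ \wt(f+\alpha_0)=q^9-q^8+(1-N_f)q^4-q^3 . $$
This proves the formula for $\alpha_0\neq0$.

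\textbf{Step 4: the case $\alpha_0=0$.} The same computation applies with the polynomial $t^2-1=(t-1)(t+1)$. It has $N_f=2$ distinct roots when $q$ is odd and $N_f=1$ when $q$ is even, since then $1=-1$. These give the weights $q^9-q^8-q^4-q^3$ and $q^9-q^8-q^3$ respectively.

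The only real obstacle is Step 2: identifying the exceptional fibre $r_1\times r_2=-e_3$ as a set of $2\times2$ matrices with a prescribed characteristic polynomial, and counting them. The counting reduces cleanly to Lemma~\ref{lem:hyperbolic} with $m=1$.
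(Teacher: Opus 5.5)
Your proposal is correct and follows essentially the same route as the paper: it uses the same fibration over $(r_1,r_2)$ with the exceptional fibre $r_1\times r_2=-e_3$, and it counts that fibre in the same way, via $a_{12}a_{21}=1-x^2-\alpha_0x$ summed over $x$. Where the paper parametrizes $e_3^\perp$ by the basis $e_1,e_2$, you describe the fibre as the $2\times 2$ matrices with characteristic polynomial $t^2+\alpha_0t-1$; this is only a cosmetic difference and yields the same count $q^2+(N_f-1)q$.
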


\begin{proof}
Let $\mathbf c=r_1\times r_2$ and let $e_3=(0,0,1)$. Then
$$ \det A+\operatorname{tr}A+\alpha_0=r_3\cdot(\mathbf c+e_3)+(a_{11}+a_{22}+\alpha_0). $$
Thus, $f(A)=0$ if and only if $r_3\cdot\mathbf c'=-(a_{11}+a_{22})-\alpha_0$, where $\mathbf c'=\mathbf c+e_3$. As before, if $\mathbf c'\ne0$, then $q^2$ values of $r_3$ satisfy this. If $\mathbf c'=0$, that is, if $\mathbf c=-e_3$, then either all $q^3$ possibilities of $r_3$ satisfy this or none of them do, according to whether $a_{11}+a_{22}=-\alpha_0$. By putting $\nu(\alpha_0)=\lvert{(r_1,r_2):r_1\times r_2=-e_3,\ a_{11}+a_{22}=-\alpha_0}\rvert$ and using Lemma~\ref{lem:cross-fibre} (2), we obtain

\begin{align*}
\lvert{A\in M_3(\Fq):f(A)=0}\rvert &=q^2\bigl(q^6 - q(q^2-1)\bigr)+q^3\nu(\alpha_0)\\
&=q^8-q^5+q^3+q^3\nu(\alpha_0).
\end{align*}

We now need to calculate the value of $\nu(\alpha_0)$. Take $w=-e_3$ in the proof of Lemma~\ref{lem:cross-fibre}(2). Here $w^\perp={v\in \Fq^3:v_3=0}$, and with $w_1=e_1$, $w_2=e_2$ we get $e_1\times e_2=e_3=-w$, so $\lambda=-1$. Hence the pairs with $r_1\times r_2=-e_3$ correspond to $r_1=(x_1,x_2,0)$ and $r_2=(y_1,y_2,0)$ with $x_1y_2-x_2y_1=-1$. Here $a_{11}=x_1$ and $a_{22}=y_2$, so the extra condition is $y_2=-x_1-\alpha_0$. Substituting these values gives $x_2y_1=1-x_1^2-x_1\alpha_0=:K(x_1)$. For fixed $x_1$, the number of pairs $(x_2,y_1)$ with $x_2y_1=K(x_1)$ is $q-1$ if $K(x_1)\ne0$, and $2q-1$ if $K(x_1)=0$. The second case happens exactly when $x_1$ is a root of $t^2+\alpha_0t-1$, and there are $N_f$ such roots. Hence
$$ \nu(\alpha_0)=N_f(2q-1)+(q-N_f)(q-1)=q(N_f+q-1). $$
Combining, we get  $\lvert\{A\in M_3(\Fq):f(A)=0\}\rvert=q^8+q^3-q^4+q^4N_f,$

and hence
$$ \wt(f)=q^9-q^8+(1-N_f)q^4-q^3. $$
Finally, take $\alpha_0=0$. If $q$ is odd, then $t^2-1=0$ has the two roots $\pm1$, so $N_f=2$. If $q$ is even, then squaring is injective on $\Fq$, so $t=1$ is the only root and $N_f=1$. Substituting $N_f=2$ or $N_f=1$ into the formula above gives the two cases stated.
\end{proof}

\subsection{Automorphisms and Stabilizers}

The group $G:=\GL_3(\Fq)\times\GL_3(\Fq)\times\Fq^{3\times 3}$ acts on $\mathcal F(3,6)$ via 
$$
(A,B^{-1},U)\cdot f(X)=f(AXB+U).
$$ This action is induced by the automorphisms $\sigma_{U,A,B}$ of Lemma~\ref{lemma: Autgrp}, so every element of $G$ preserves the weight of a codeword. The element $(A,B^{-1},U)$ lies in the stabilizer $\mathrm{Stab}(f)$ if and only if $f(AXB+U)=f(X)$. The map $X\mapsto AXB$ is a bijection, so we may substitute $X\mapsto A^{-1}XB^{-1}$. This shows that the preceding identity is equivalent to
\begin{equation}\label{Equi_Ident}
f(X+U)=f(A^{-1}XB^{-1}).
\end{equation}

Now, we compute $\mathrm{Stab}(f)$  for $f=\det X+\ell(X)+\alpha_0$, where $\ell=\ell_S$ is the partial trace function of the set $S$, i.e., $\ell_S:=\sum_{i\in S}X_{ii}$ for $S\in\{\emptyset,\{1\},\{1,2\},\{1,2,3\}\}$.

\begin{lemma}\label{lem:u-zero}
Let $f=\det X+\ell(X)+\alpha_0$. If $(A,B^{-1},U)\in\mathrm{Stab}(f)$, then $U=0$.
\end{lemma}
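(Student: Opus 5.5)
We work with the identity \eqref{Equi_Ident}, $f(X+U)=f(A^{-1}XB^{-1})$, and compare homogeneous components of both sides as polynomials in the nine variables $X_{ij}$. The right-hand side is the substitution of the linear map $X\mapsto A^{-1}XB^{-1}$ into $f$. Such a substitution sends each homogeneous component of $f$ to a homogeneous component of the same degree. Hence the right-hand side has degree-$3$ part $\det(A^{-1})\det(B^{-1})\det X$, degree-$2$ part zero, degree-$1$ part $\ell(A^{-1}XB^{-1})$, and constant term $\alpha_0$.

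Next we expand the left-hand side. For fixed $U$, the multilinearity of the determinant in the columns (or rows) gives the Taylor-type expansion
$$\det(X+U)=\det X+\langle U,\operatorname{adj}(X)^{t}\rangle\cdot(\text{suitably arranged})+\langle \operatorname{adj}(U)^{t},X\rangle+\det U.$$
More precisely, the quadratic part is $\sum_{i,j}U_{ij}\,\partial\det X/\partial X_{ij}=\sum_{i,j}U_{ij}\,\operatorname{adj}(X)_{ji}$. In the notation of \eqref{eq:Adjrepr}, this is $q_2^{U^{t}}(X)$. The term $\ell(X+U)=\ell(X)+\ell(U)$ contributes nothing in degree $2$. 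So the degree-$2$ part of $f(X+U)$ is exactly $q_2^{U^{t}}(X)$.

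Comparing degree-$2$ parts gives $q_2^{U^{t}}(X)=0$ in $V_2$. Since the nine entries of $\operatorname{adj}(X)$ are, up to sign, the nine $2\times2$ minors and form a basis of $V_2$, the coefficient matrix is unique. Therefore $U^{t}=0$, that is, $U=0$.

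The only step that needs care is the identification of the quadratic part of $\det(X+U)$ with the linear combination of cofactors $\sum U_{ij}\operatorname{adj}(X)_{ji}$. It follows by expanding $\det(X+U)$ multilinearly in the rows: the terms taking exactly one row from $U$ and two rows from $X$ are, by Laplace expansion along the row taken from $U$, the sums $\sum_j U_{ij}\cdot(\text{cofactor of }X_{ij})$, which equal $\sum_{j} U_{ij}\operatorname{adj}(X)_{ji}$. Everything else is bookkeeping of degrees. Note that the argument does not use the specific form of $\ell$ or $\alpha_0$, nor the values of $A$ and $B$.
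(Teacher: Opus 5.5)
Your proof is correct and follows essentially the same route as the paper: both expand $\det(X+U)$ multilinearly, identify its quadratic part as $\sum_{i,j}U_{ij}\operatorname{adj}(X)_{ji}=\operatorname{tr}(\operatorname{adj}(X)U)$, and observe that $f(A^{-1}XB^{-1})$ has no degree-$2$ component. Both then conclude $U=0$ from the linear independence of the entries of $\operatorname{adj}(X)$; the only cosmetic difference is that you expand along rows (equivalently, via partial derivatives of $\det$), whereas the paper expands along columns.
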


\begin{proof}
We first expand $\det(X+U)$. Let $X=[ X_1, X_2, X_3]$ and $U=[U_1, U_2, U_3]$, where $X_i$ and $U_i$ are columns of $X$ and $U$ respectively. Now,
\begin{align*}
\det(X+U)
=&\det(X_1+U_1,X_2+U_2,X_3+U_3)\\
=&\det(X_1,X_2,X_3)
+\det(U_1,X_2,X_3)
+\det(X_1,U_2,X_3)
+\det(X_1,X_2,U_3)\\
&+\det(U_1,U_2,X_3)
+\det(U_1,X_2,U_3)
+\det(X_1,U_2,U_3)
+\det(U_1,U_2,U_3).
\end{align*}

The first and last terms are $\det X$ and $\det U$, respectively. We now identify the three terms containing exactly one column of $U$. Recall that the $j^{\it th}$ column of $\operatorname{adj}(X)$ consists of the cofactors associated with the $j^{\it th}$ column of $X$. Hence the standard cofactor formula gives
$$ \det(U_1,X_2,X_3) +\det(X_1,U_2,X_3) +\det(X_1,X_2,U_3) =\operatorname{tr}(\operatorname{adj}(X)U). $$
Similarly, the three terms containing exactly two columns of $U$ can be viewed as the terms obtained by replacing one column of $U$ by the corresponding column of $X$. Applying the same cofactor formula, now to $U$, gives
$$ \det(U_1,U_2,X_3) +\det(U_1,X_2,U_3) +\det(X_1,U_2,U_3) = \operatorname{tr}(\operatorname{adj}(U)X). $$
Combining these four groups of terms yields
$$ \det(X+U)=\det X+\operatorname{tr}(\operatorname{adj}(X)U)+\operatorname{tr}(\operatorname{adj}(U)X)+\det U. $$
Now $\operatorname{tr}(\operatorname{adj}(X)U)=\sum_{i,k}\operatorname{adj}(X)_{ik}U_{ki}$ is quadratic in $X$, because every entry of $\operatorname{adj}(X)$ is a $2\times2$ minor of $X$ up to sign, whereas $\operatorname{tr}(\operatorname{adj}(U)X)$ is linear in $X$. Thus, $f(X+U)$ equals $\det X$ plus this quadratic term, plus a linear term, plus a constant. On the other hand, $f(A^{-1}XB^{-1})=\gamma\det X+\ell(A^{-1}XB^{-1})+\alpha_0$, where $\gamma=\det(A^{-1})\det(B^{-1})$, and this has no quadratic term. We may compare the two sides of \eqref{Equi_Ident} as polynomials. In particular, their quadratic parts must agree, which gives $\sum_{i,k}u_{ki}\operatorname{adj}(X)_{ik}=0$. The nine entries of $\operatorname{adj}(X)$ form a basis of $V_2$, so they are linearly independent. This implies $U_{ki}=0$ for every $i,k$, and  hence $U=0$.
\end{proof}

By Lemma~\ref{lem:u-zero}, every element of $\mathrm{Stab}(f)$ has $U=0$. Thus, \eqref{Equi_Ident} becomes
$$ \det X+\ell(X)+\alpha_0 =\det(A^{-1})\det(B^{-1})\det X+\ell(A^{-1}XB^{-1})+\alpha_0. $$
 Comparing the degrees on both sides, we get $\det A\det B=1$, while the remaining terms give $\ell(A^{-1}XB^{-1})=\ell(X)$. In particular, $\mathrm{Stab}(\det X+\ell+\alpha_0)=\mathrm{Stab}(\det X+\ell)$ for every $\alpha_0\in\Fq$. Thus, it is enough to compute $\mathrm{Stab}(\det X+\ell_S)$ once for each $S$. The same stabilizer size then applies to every shift by $\alpha_0\in\Fq$.

Write $P=A^{-1}$ and $Q=B^{-1}$, and put $E_S:=\operatorname{diag}(\mathbf1_{1\in S},\mathbf1_{2\in S},\mathbf1_{3\in S})$. A short computation shows that $\ell_S(PXQ)=\ell_S(X)$ for all $X$ if and only if $QE_SP=E_S$. Hence
$$ \mathrm{Stab}(\det X+\ell_S)\cong\{(P,Q)\in\GL_3(\Fq)^2: QE_SP=E_S,\ \det P\det Q=1\}. $$

\begin{proposition}\label{prop:stab}
The stabilizer sizes are as follows.
\begin{align*}
    |\mathrm{Stab}(\det X)|=&\frac{|\GL_3(\Fq)|^2}{q-1}, \\
    |\mathrm{Stab}(\det X+X_{11})|=&q^6(q-1)^2(q^2-1)^2,\\
    |\mathrm{Stab}(\det X+X_{11}+X_{22})|=&q^5(q-1)^2(q^2-1),\\
    |\mathrm{Stab}(\det X+\operatorname{tr}X)|=&|\GL_3(\Fq)|.
\end{align*}
\end{proposition}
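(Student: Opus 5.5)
We will use the description just obtained,
$$\mathrm{Stab}(\det X+\ell_S)\cong\{(P,Q)\in\GL_3(\Fq)^2: QE_SP=E_S,\ \det P\det Q=1\},$$
and count the pairs $(P,Q)$ directly for each of the four choices of $S$. In every case the method is the same. First we parametrize the pairs satisfying the linear-algebraic condition $QE_SP=E_S$. Then we impose the determinant condition. This condition cuts the count by a factor $q-1$ whenever $\det P\det Q$ takes every value in $\Fq^\times$ equally often over the parametrized family.

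For $S=\emptyset$ we have $E_S=0$, so the only condition is $\det P\det Q=1$. For each $P\in\GL_3(\Fq)$, the matrices $Q$ with $\det Q=(\det P)^{-1}$ number $|\SL_3(\Fq)|=|\GL_3(\Fq)|/(q-1)$ by Lemma~\ref{lem:prescribed-det}. This gives $|\GL_3(\Fq)|^2/(q-1)$.

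For $S=\{1,2,3\}$ we have $E_S=I_3$, so $Q=P^{-1}$. Then $\det P\det Q=1$ holds automatically, and the count is $|\GL_3(\Fq)|$.

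For $S=\{1\}$ we have $E_S=E_{11}$. Write $q_1$ for the first column of $Q$ and $p_1$ for the first row of $P$, so that $QE_{11}P=q_1p_1$. The condition $q_1p_1=E_{11}$ forces $q_1=\lambda e_1$ and $p_1=\lambda^{-1}e_1^t$ for some $\lambda\in\Fq^\times$. Hence $P$ has first row $(\lambda^{-1},0,0)$ and arbitrary lower rows, and $Q$ has first column $\lambda e_1$ and arbitrary other columns, subject to invertibility. Expanding along that row and column gives $\det P=\lambda^{-1}\det P'$ and $\det Q=\lambda\det Q'$. Here $P'$ and $Q'$ are the lower-right $2\times2$ blocks, which must be invertible. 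The entries below the first row of $P$, namely $P_{21},P_{31}$, are free, giving $q^2$ choices. Likewise the entries $Q_{12},Q_{13}$ to the right of the first column of $Q$ are free, giving $q^2$ choices. So the family has size
$$(q-1)\,q^4\,|\GL_2(\Fq)|^2,$$
and the condition $\det P\det Q=1$ becomes $\det P'\det Q'=1$. This divides by $q-1$, leaving $q^4|\GL_2|^2/(q-1)\cdot(q-1)$. Since $|\GL_2|=q(q-1)(q^2-1)$, the result is $q^6(q-1)^2(q^2-1)^2$, as claimed.

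For $S=\{1,2\}$ the same approach applies. Let $Q_{(2)}$ be the $3\times2$ matrix formed by the first two columns of $Q$, and $P_{(2)}$ the $2\times3$ matrix formed by the first two rows of $P$. The condition reads $Q_{(2)}P_{(2)}=E_{11}+E_{22}$. Comparing row spaces and column spaces, this forces $Q_{(2)}=\binom{M}{0}$ and $P_{(2)}=(M^{-1}\ 0)$ for some $M\in\GL_2(\Fq)$. The remaining data are the third row of $P$ and the third column of $Q$. Their first two entries are free, giving $q^2\cdot q^2$ choices. Their $(3,3)$ entries $p,s$ must be nonzero for invertibility. Then $\det P=\det(M^{-1})\,p$ and $\det Q=\det(M)\,s$, so the determinant condition is $ps=1$, which leaves $q-1$ choices. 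The total is $|\GL_2|\,q^4(q-1)=q^5(q-1)^2(q^2-1)$, matching the statement.

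The main obstacle is the rigidity step for $S=\{1\}$ and $S=\{1,2\}$. One must justify that $Q_{(2)}P_{(2)}=E_{11}+E_{22}$ forces exactly the block form above. The argument: the column space of $Q_{(2)}$ contains the column space of the product, which is $\langle e_1,e_2\rangle$; since $Q_{(2)}$ has rank at most $2$, the two spaces are equal. Dually, the row space of $P_{(2)}$ equals $\langle e_1^t,e_2^t\rangle$. Once the block forms are established, the remaining counting is routine.
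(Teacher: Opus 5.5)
Your proposal is correct and follows the paper's proof essentially step for step. You make the same reduction to counting pairs $(P,Q)$ with $QE_SP=E_S$ and $\det P\det Q=1$, use the same rank-one factorization $Qe_1=\lambda e_1$, $e_1^tP=\lambda^{-1}e_1^t$ for $S=\{1\}$, and arrive at the same block shape $P_{12}=0$, $Q_{21}=0$, $P_{11}=Q_{11}^{-1}$ for $S=\{1,2\}$, with identical counts. The only cosmetic difference is that you justify this block shape through column and row spaces, whereas the paper reads it off directly from the block equations $Q_{11}P_{11}=I_2$, $Q_{11}P_{12}=0$ and $Q_{21}P_{11}=0$.
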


\begin{proof}
Since
$$ \mathrm{Stab}(\det X+\ell_S)\cong\{(P,Q)\in\GL_3(\Fq)^2: QE_SP=E_S,\ \det P\det Q=1\}, $$
we compute the precise $\mathrm{Stab}$ in different cases of $S$. First, let $S=\emptyset$. Then $E_S=0$, so the condition $QE_SP=E_S$ is automatic and hence
$$ \mathrm{Stab}(\det(X))=\{(P,Q):\det P\det Q=1\}. $$
By Lemma~\ref{lem:prescribed-det} this set has size $|\GL_3(\Fq)|^2/(q-1)$.

Let $S=\{1\}$. In this case, $E_S=e_1e_1^T$, and  $QE_SP=E_S$ gives $(Qe_1)(e_1^TP)=e_1e_1^T$. This is an equality of rank-one matrices, and it holds if and only if $Qe_1=\lambda e_1$ and $e_1^TP=\lambda^{-1}e_1^T$ for some $\lambda\ne0$. Write $P$ and $Q$ with this shape and expand $\det P$ and $\det Q$ along the first row and the first column. This gives $\det P=\lambda^{-1}D_P$ and $\det Q=\lambda D_Q$, where $D_P$ and $D_Q$ are the determinants of the remaining $2\times2$ blocks. The free entries in the first column of $P$, and in the first row of $Q$, contribute a factor $q^2$ each. The condition $\det P\det Q=1$ becomes $D_PD_Q=1$. By Lemma~\ref{lem:prescribed-det} with $n=2$, the number of pairs of $2\times2$ matrices with $D_PD_Q=1$ is $(q-1)|\SL_2(\Fq)|^2$. Multiplying by $q^2\cdot q^2$ and by the $q-1$ choices of $\lambda$ gives $q^6(q-1)^2(q^2-1)^2$.

Now, let $S=\{1,2\}$, so that $E_S=\begin{pmatrix}I_2&0\\0&0\end{pmatrix}$. Write $P=\begin{pmatrix}P_{11}&P_{12}\\P_{21}&P_{22}\end{pmatrix}$ and $Q=\begin{pmatrix}Q_{11}&Q_{12}\\Q_{21}&Q_{22}\end{pmatrix}$ in $2+1$ block form such that $P_{11},Q_{11}$ are $2\times2$ and $P_{22},Q_{22}$ are scalars. Multiplying out in blocks,
$$
E_SP=\begin{pmatrix}P_{11}&P_{12}\\0&0\end{pmatrix},\qquad
QE_SP=\begin{pmatrix}Q_{11}P_{11}&Q_{11}P_{12}\\Q_{21}P_{11}&Q_{21}P_{12}\end{pmatrix}.
$$
So $QE_SP=E_S$ holds if and only if
$$ Q_{11}P_{11}=I_2,\qquad Q_{11}P_{12}=0,\qquad Q_{21}P_{11}=0,\qquad Q_{21}P_{12}=0 . $$
The first equation says that $Q_{11}$ and $P_{11}$ are inverse to each other, so both lie in $\GL_2(\Fq)$ and $P_{11}=Q_{11}^{-1}$. Since $Q_{11}$ is invertible, the second equation gives $P_{12}=0$, and since $P_{11}$ is invertible, the third gives $Q_{21}=0$. Hence
$$
P=\begin{pmatrix}Q_{11}^{-1}&0\\P_{21}&P_{22}\end{pmatrix},\qquad
Q=\begin{pmatrix}Q_{11}&Q_{12}\\0&Q_{22}\end{pmatrix},
$$
where $Q_{11}\in\GL_2(\Fq)$, $P_{21}$ and $Q_{12}$ are arbitrary, and $P_{22},Q_{22}$ are scalars. Conversely, every such pair satisfies $QE_SP=E_S$. Here $\det P\det Q=P_{22}Q_{22}$, so the condition $\det P\det Q=1$ leaves $q-1$ choices of the pair $(P_{22},Q_{22})$. Altogether we get $|\GL_2(\Fq)|\cdot q^2\cdot q^2\cdot(q-1)=q^5(q-1)^2(q^2-1)$.

Let $S=\{1,2,3\}$. Then $E_S=I_3$, so $QE_SP=E_S$, i.e., $Q=P^{-1}$. In that case, $\det P\det Q=1$ holds automatically. Hence $\mathrm{Stab}$ is in bijection with $\GL_3(\Fq)$ via $P\mapsto(P,P^{-1})$.
\end{proof}

\subsection{The Complete Weight Distribution}

We now combine the results above into one theorem. It covers every one of the $q^{20}=|\mathcal F(3,6)|$ codewords. Let $\rho_r$ be the number of matrices of rank $r$ in $M_3(\Fq)$. Then it is well known that
$$ \rho_0=1,\quad \rho_1=(q^3-1)(q^2+q+1), $$
$$ \rho_2=q(q-1)^2(q+1)(q^2+q+1)^2,\quad \rho_3=|\GL_3(\Fq)|, $$
and $\rho_0+\rho_1+\rho_2+\rho_3=q^9$.

The next lemma shows that the weight of $f=\det X+\ell+\alpha_0$ depends only on the rank of the coefficient matrix of the linear form $\ell$, and on whether $\alpha_0$ is zero. This allows us to use one representative for each rank, as we already did for the propositions with no cubic term. Recall that, as fixed above, the rank of an element of $V_1$ or of $V_2$ always means the rank of its coefficient matrix.

\begin{lemma}\label{lem:rank-only}
If $\ell,\tilde\ell\in V_1$ have coefficient matrices of the same rank, then
$$ \wt(\det X+\ell+\alpha_0)=\wt(\det X+\tilde\ell+\alpha_0) $$
for every nonzero $\alpha_0\in\Fq$. Moreover, this common value does not depend on the value of $\alpha_0$.
\end{lemma}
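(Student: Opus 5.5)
The plan is to prove the lemma entirely through the weight-preserving action of $G$ from the previous subsection, restricted to $U=0$ and followed by rescaling by a nonzero scalar. No zero counting is needed. For $P,Q\in\GL_3(\Fq)$, let $L$ be the coefficient matrix of $\ell$. As noted after Lemma~\ref{lem:V2-transitive}, the substitution $X\mapsto PXQ$ turns $\ell$ into the linear form with coefficient matrix $P^tLQ^t$, and it sends $\det X$ to $c\det X$ with $c:=\det P\det Q$. Since both the substitution and multiplication by $c^{-1}$ preserve weights (Lemma~\ref{lemma: Autgrp}),
$$ \wt(\det X+\ell+\alpha_0)=\wt\bigl(\det X+c^{-1}\ell'+c^{-1}\alpha_0\bigr),\qquad \ell' \text{ with coefficient matrix } P^tLQ^t. $$
The idea is to prove both assertions at once. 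Given $(\ell,\alpha_0)$ and a target $(\tilde\ell,\tilde\alpha_0)$ with $\rank\tilde L=\rank L$ and $\alpha_0,\tilde\alpha_0\neq0$, I will set $c:=\alpha_0/\tilde\alpha_0$. It then suffices to find $P,Q\in\GL_3(\Fq)$ with
$$ \det P\det Q=c\qquad\text{and}\qquad P^tLQ^t=c\tilde L. $$

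First I treat $r=\rank L<3$. Exactly as in the proof of Lemma~\ref{lem:V2-transitive}, write $L=P_0E_rQ_0$ and $c\tilde L=P_1E_rQ_1$ with $P_i,Q_i$ invertible. Because the last row and column of $E_r$ vanish, I can multiply $Q_1$ on the left by $\operatorname{diag}(1,1,\mu)$ without changing $P_1E_rQ_1$. Taking $P^t=P_1P_0^{-1}$ and $Q^t=Q_0^{-1}Q_1$, the scalar $\mu$ rescales $\det Q$ by an arbitrary element of $\Fq^\times$, so I can arrange $\det P\det Q=c$. The case $r=0$ is trivial: then $\ell=\tilde\ell=0$, and only the condition on $\det P\det Q$ remains. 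This settles ranks $0,1,2$, consistently with the single weight obtained for $\alpha_0\neq0$ in Propositions~\ref{prop:cubic-rank0}, \ref{prop:cubic-rank1} and \ref{prop:cubic-rank2}.

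The main obstacle is $r=3$. Taking determinants in $P^tLQ^t=c\tilde L$ gives $c\det L=c^3\det\tilde L$, that is, $\det L=c^2\det\tilde L$. So the square class of $\det L/\det\tilde L$, together with the parameter $c=\alpha_0/\tilde\alpha_0$, constrains which pairs $(\ell,\alpha_0)$ can be connected by this route. My first attempt would be to enlarge the available symmetries, for instance with transposition $X\mapsto X^t$ or with extra rescalings, in order to remove this constraint. However, Proposition~\ref{prop:cubic-rank3} already computes $\wt(\det X+\operatorname{tr}X+\alpha_0)=q^9-q^8+(1-N_f)q^4-q^3$, where $N_f$ is the number of roots of $t^2+\alpha_0t-1$. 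This value genuinely varies with $\alpha_0$: for $q$ odd, the discriminant $\alpha_0^2+4$ takes both square and nonsquare values as $\alpha_0$ runs over $\Fq^\times$. So the rank-$3$ case of the statement as worded cannot be proved by this or any other method, and the orbit argument above establishes the lemma only for ranks $0,1,2$.

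For rank $3$, I would replace the lemma's conclusion by what the orbit argument does deliver. For the representative $\operatorname{tr}X$, namely $L=I_3$, the constraint reads $1=c^2\det\tilde L$. So $\det X+\tilde\ell+\tilde\alpha_0$ lies in the orbit of $\det X+\operatorname{tr}X+\alpha_0$ exactly when $\det\tilde L$ is a square, say $\det\tilde L=c^{-2}$, and $\alpha_0=c\tilde\alpha_0$. Since $c$ is then fixed only up to sign, this determines $\alpha_0$ up to sign. As a consistency check, replacing $\alpha_0$ by $-\alpha_0$ replaces $t^2+\alpha_0t-1$ by a polynomial with the negated roots, so $N_f$ and the weight are unchanged. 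I would therefore state the rank-$3$ case as follows: the weight is determined by the square class of $\det\tilde L$ together with the resulting value of $\alpha_0$, with Proposition~\ref{prop:cubic-rank3} giving the weight in the square-class case. The case where $\det\tilde L$ is a nonsquare is not reached from $\operatorname{tr}X$ by this action and needs a separate argument or computation.
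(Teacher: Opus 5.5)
For ranks $0,1,2$ your argument is correct and is essentially the paper's. The paper first builds a substitution with $\gamma(A,B)=1$ carrying $\ell$ to $\tilde\ell$, by rescaling a column $j\notin S$, and then reuses that rescaling to move $\alpha_0$ to $\alpha_0/c$. You fold both steps into the single system $\det P\det Q=c$, $P^tLQ^t=c\tilde L$ with $c=\alpha_0/\tilde\alpha_0$, where the free factor $\operatorname{diag}(1,1,\mu)$ does the job of the column rescaling.

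Your verdict on rank $3$ is also right, and this is exactly where the paper's proof fails. Its closing remark that ``the case $r=3$ is already covered by Proposition~\ref{prop:cubic-rank3}'' is unjustified for two reasons. That proposition treats only $\ell=\operatorname{tr}X$, not an arbitrary rank-$3$ form. Moreover, its formula $q^9-q^8+(1-N_f)q^4-q^3$ depends on $\alpha_0$ through $N_f$, which contradicts the ``moreover'' clause.

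There are two corrections to your rank-$3$ discussion. First, your claim that $\alpha_0^2+4$ takes both square and nonsquare values for every odd $q$ is false for $q=3$. There $\alpha_0^2+4=2$ is a nonsquare for both $\alpha_0=1,2$; this is the case $\mu_0=2$, $\mu_1=\mu_2=0$ of Lemma~\ref{lem:M-counts}. The ``moreover'' clause fails for $\operatorname{tr}X$ precisely when at least two of $\mu_0,\mu_1,\mu_2$ are positive, and this happens for every $q\ge4$, including even $q$.

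Second, the nonsquare case you leave open needs no new idea. Substituting $X=(L^t)^{-1}Y$ turns $f$ into $d^{-1}\det Y+\operatorname{tr}Y+\alpha_0$ with $d=\det L$. The row count of Proposition~\ref{prop:cubic-rank3}, now with degenerate fibre $r_1\times r_2=-d\,e_3$, then gives, for every invertible $L$,
\[
\wt(\det X+\ell+\alpha_0)=q^9-q^8+(1-N)q^4-q^3,\qquad N=\#\{t\in\Fq:\ t^2+\alpha_0t-\det L=0\}.
\]
For $\alpha_0\neq0$ this depends only on $\det L/\alpha_0^2$, which is the invariant behind your square-class observation. The formula also shows that the first assertion of the lemma fails at rank $3$ for every $q\ge3$. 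For instance, with $q=3$ and $\alpha_0=1$, $\det L=1$ gives $N=0$ while $\det L=2$ gives $N=1$.

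Theorem~\ref{thm:full-wd} relies on this lemma to give every rank-$3$ form $\ell$ the weights of $\operatorname{tr}X$, so the rank-$3$ rows of Table~\ref{tab:full-wd} are affected for odd $q$. For example, with $\alpha_0=0$, the rank-$3$ forms with nonsquare $\det L$ have $N=0$ and weight $q^9-q^8+q^4-q^3$, not $q^9-q^8-q^4-q^3$.
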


\begin{proof}
For $A,B\in\GL_3(\Fq)$, we define $\gamma(A,B)=\det(A^{-1})\det(B^{-1})$, so that the substitution $X\mapsto A^{-1}XB^{-1}$ carries $\det X$ to $\gamma(A,B)\det X$. Let $L$ and $\tilde L$ be the coefficient matrices of $\ell$ and $\tilde\ell$, respectively, and let $r$ be their common rank. Since the coefficient matrix of $\ell(AXB)$ is $A^{t}LB^{t}$, the group $\GL_3(\Fq)\times\GL_3(\Fq)$ acts on these coefficient matrices in the usual way, and the action is transitive on the matrices of rank $r$.

Let $r<3$, and let $\ell_S$ be the diagonal representative of rank $r$, i.e., $S$ is a proper subset of ${1,2,3}$ and $\ell_S=\sum_{i\in S}X_{ii}$. Pick some $j\notin S$. For a given nonzero $e\in\Fq$, let $A=I$ and $B^{-1}=I+(e-1)E_{jj}$. This substitution only rescales column $j$ of $X$, but that column does not appear in the expression $\ell_S$. Therefore, $\ell_S(A^{-1}XB^{-1})=\ell_S(X)$, while $\gamma(A,B)=e$, and by varying $e$, we may recover every element of $\Fq^\times$.

Now, let $\ell$ and $\tilde\ell$ have common rank $r<3$. By transitivity, fix $A_0,B_0$ with $\ell(A_0^{-1}XB_0^{-1})=\ell_S(X)$, and put $\gamma_0=\gamma(A_0,B_0)$. Also fix $A_2,B_2$ with $\ell_S(A_2^{-1}XB_2^{-1})=\tilde\ell(X)$, and put $\gamma_2=\gamma(A_2,B_2)$. By the construction above, choose $A_1,B_1$ that fix $\ell_S$ and satisfy $\gamma(A_1,B_1)=(\gamma_0\gamma_2)^{-1}$. Composing the three substitutions gives $A,B$ with $\ell(A^{-1}XB^{-1})=\tilde\ell(X)$ and $\gamma(A,B)=1$. Hence $(\det X+\ell+\alpha_0)(A^{-1}XB^{-1})=\det X+\tilde\ell(X)+\alpha_0$, and therefore $\wt(\det X+\ell+\alpha_0)=\wt(\det X+\tilde\ell+\alpha_0)$.

It only remains to show now that $\wt(f)$ does not depend on the nonzero choice of $\alpha_0$. Fix $\ell=\ell_S$ of rank $r<3$, let $c\in\Fq^\times$, and take $A_1,B_1$ that fix $\ell_S$ with $\gamma(A_1,B_1)=c$. Put $g=\det X+\ell_S+\alpha_0$. Then $g(A_1^{-1}XB_1^{-1})=c\bigl(\det X+\ell_S(X)/c+\alpha_0/c\bigr)$, so
$$ \wt(\det X+\ell_S+\alpha_0)=\wt(\det X+\ell_S/c+\alpha_0/c). $$
Now $\ell_S/c$ has coefficient matrix $L_S/c$, of the same rank as $L_S$. So the part just proved, applied with the constant $\alpha_0/c$ fixed, gives $\wt(\det X+\ell_S/c+\alpha_0/c)=\wt(\det X+\ell_S+\alpha_0/c)$. Hence $\wt(\det X+\ell_S+\alpha_0)=\wt(\det X+\ell_S+\alpha_0/c)$ for every $c\in\Fq^\times$. As $c$ ranges over $\Fq^\times$, so does $\alpha_0/c$, and therefore the weight is the same for every nonzero $\alpha_0$. This proves the lemma for $r<3$, and the case $r=3$ is already covered by Proposition~\ref{prop:cubic-rank3}.
\end{proof}

\begin{lemma}\label{lem:translation-elim}
For a given $W\in\mathbb N$,
$$
\begin{aligned}
&\left|\left\{(C,\ell,\alpha_0)\in M_3(\Fq)\times V_1\times\Fq:
\wt(\det X+q_2^C+\ell+\alpha_0)=W\right\}\right|\\
&\qquad=q^9\left|\left\{(\tilde\ell,\tilde\alpha_0)\in V_1\times\Fq:
\wt(\det X+\tilde\ell+\tilde\alpha_0)=W\right\}\right|.
\end{aligned}
$$
In other words, the number of elements of the form $\det X+q_2^C+\ell+\alpha_0$ having weight $W$ is $q^9$ times the number of elements of the form $\det X+\tilde\ell+\tilde\alpha_0$ having weight $W$.
\end{lemma}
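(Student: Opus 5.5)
The plan is to use the translation part of the automorphism group. By Lemma~\ref{lemma: Autgrp} the substitution $X\mapsto X+U$ preserves weights. The expansion in the proof of Lemma~\ref{lem:u-zero} gives
$$ \det(X+U)=\det X+\operatorname{tr}(\operatorname{adj}(X)U)+\operatorname{tr}(\operatorname{adj}(U)X)+\det U. $$
Since $\operatorname{tr}(\operatorname{adj}(X)U)=\sum_{i,k}U_{ki}\operatorname{adj}(X)_{ik}=\langle U^t,\operatorname{adj}X\rangle=q_2^{U^t}(X)$, by \eqref{eq:Tracefunction}, translating $\det X$ by $U$ produces the quadratic part $q_2^{U^t}$. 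Translating the lower-order terms $\ell+\alpha_0$ changes only the constant term.

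Concretely, define a map
$$ \Phi: M_3(\Fq)\times\bigl(V_1\times\Fq\bigr)\longrightarrow \{\det X+q_2^C+\ell+\alpha_0\}, \qquad (U,\tilde\ell,\tilde\alpha_0)\longmapsto (\det X+\tilde\ell+\tilde\alpha_0)(X+U). $$
By the identity above, the image is
$$ \det X+q_2^{U^t}(X)+\bigl(\operatorname{tr}(\operatorname{adj}(U)X)+\tilde\ell(X)\bigr)+\bigl(\det U+\tilde\ell(U)+\tilde\alpha_0\bigr). $$
This lies in the target set, with $C=U^t$, $\ell=\tilde\ell+\operatorname{tr}(\operatorname{adj}(U)\,\cdot\,)$ and $\alpha_0=\det U+\tilde\ell(U)+\tilde\alpha_0$. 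Moreover $\wt(\Phi(U,\tilde\ell,\tilde\alpha_0))=\wt(\det X+\tilde\ell+\tilde\alpha_0)$, because $X\mapsto X+U$ is a bijection of $M_3(\Fq)$.

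The key step is to show that $\Phi$ is a bijection, which I will do by exhibiting its inverse. Given $(C,\ell,\alpha_0)$, the parameter $U=C^t$ is forced by the uniqueness of the coefficient matrix in \eqref{eq:Adjrepr}, that is, by the linear independence of the entries of $\operatorname{adj}(X)$. Then $\tilde\ell=\ell-\operatorname{tr}(\operatorname{adj}(U)X)$ is forced, and after that $\tilde\alpha_0=\alpha_0-\det U-\tilde\ell(U)$ is forced. Each equation is solvable uniquely, so $\Phi$ is bijective.

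Hence for fixed $W$, the pairs $(\tilde\ell,\tilde\alpha_0)$ of weight $W$, each taken with one of the $q^9$ choices of $U$, correspond exactly to the triples $(C,\ell,\alpha_0)$ of weight $W$. This gives the factor $q^9$. Finally, distinct triples give distinct functions, by the uniqueness of the graded decomposition and the injectivity of $\ev$, so counting triples is the same as counting codewords.

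The only delicate point is the identification of the quadratic term with $q_2^{U^t}$, in particular the transpose and sign conventions. Even if the convention came out as $C=U$ instead, the argument would go through unchanged, since all that is needed is that $U\mapsto C$ is a linear bijection of $M_3(\Fq)$.
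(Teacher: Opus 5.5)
Your proof is correct and is essentially the paper's argument: both use the translation $X\mapsto X+U$, the expansion $\det(X+U)=\det X+\operatorname{tr}(\operatorname{adj}(X)U)+\operatorname{tr}(\operatorname{adj}(U)X)+\det U$, and the fact that $U\mapsto\operatorname{tr}(\operatorname{adj}(X)U)$ is a linear isomorphism onto $V_2$. The only difference is direction: the paper translates a general $\det X+q_2^C+\ell+\alpha_0$ by the unique $U$ that kills its quadratic part, so it must also expand $q_2^C(X+U)$ via $P(X,U)$ and then fix $C$ and sum, whereas you translate the reduced form $\det X+\tilde\ell+\tilde\alpha_0$, which needs only the expansion of $\det$ and gives an explicit bijection on all triples at once.
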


\begin{proof} 
From the proof of Lemma~\ref{lem:u-zero},
$$ \det(X+U)=\det X+\operatorname{tr}(\operatorname{adj}(X)U) +\operatorname{tr}(\operatorname{adj}(U)X)+\det U. $$
Also $\operatorname{adj}(X+U)=\operatorname{adj}(X)+P(X,U)+\operatorname{adj}(U)$, where, for $\{j_1,j_2\}=\{1,2,3\}\setminus\{j\}$ ($j_1<j_2$) and $\{i_1,i_2\}=\{1,2,3\}\setminus\{i\}$ ($i_1<i_2$),
$$
P(X,U)_{ij}=(-1)^{i+j}\left[
\begin{vmatrix}
X_{j_1i_1}&X_{j_1i_2}\\
U_{j_2i_1}&U_{j_2i_2}
\end{vmatrix}
+
\begin{vmatrix}
U_{j_1i_1}&U_{j_1i_2}\\
X_{j_2i_1}&X_{j_2i_2}
\end{vmatrix}
\right].
$$
Each entry of $P(X,U)$ is linear in $X$ for fixed $U$. Applying $\langle C,-\rangle$ gives
$$ q_2^C(X+U)=q_2^C(X)+\langle C,P(X,U)\rangle+q_2^C(U). $$
Writing $f=\det X+q_2^C+\ell+\alpha_0$, this gives
\begin{align*}
    f(X+U)=&\det(X+U)+q_2^C(X+U)+\ell(X+U)+\alpha_0 \\
=&\det X+\bigl[\operatorname{tr}(\operatorname{adj}(X)U)+q_2^C(X)\bigr]
+\tilde\ell(X)+\tilde\alpha_0,
\end{align*}
where $\tilde\ell(X)=\operatorname{tr}(\operatorname{adj}(U)X) +\langle C,P(X,U)\rangle+\ell(X)$, and $\tilde\alpha_0=\det U+q_2^C(U)+\ell(U)+\alpha_0$.

The map $U\mapsto\operatorname{tr}(\operatorname{adj}(X)U)$ is a linear isomorphism from $M_3(\Fq)$ onto $V_2$. Thus, there is a unique $U$ with $\operatorname{tr}(\operatorname{adj}(X)U)=-q_2^C(X)$, and for this $U$ the bracketed term $\bigl(\operatorname{tr}(\operatorname{adj}(X)U)+q_2^C(X)\bigr)$ vanishes. Hence $f(X+U)=\det X+\tilde\ell(X)+\tilde\alpha_0$. Since $X\mapsto X+U$ is a bijection, $\wt(f)=\wt(\det X+\tilde\ell+\tilde\alpha_0)$.

Fix $U$ (and hence $C$). Then the transformation $(\ell,\alpha_0)\mapsto(\tilde\ell,\tilde\alpha_0)$ becomes an affine bijection on $V_1\times\Fq$. Thus, for each of the $q^9$ possible choices of $U$ and each prescribed $(\tilde\ell,\tilde\alpha_0)$, there is a unique pre-image pair $(\ell,\alpha_0)$. Adding up over all $U$ yields the claimed multiplicative factor $q^9$.
\end{proof}

\begin{lemma}\label{lem:M-counts}
For $\alpha_0\in\Fq^\times$, let $N_f(\alpha_0)$ be the number of roots of polynomial $t^2+\alpha_0t-1$ in $\Fq$, and let $\mu_i=\lvert\{\alpha_0\in\Fq^\times: N_f(\alpha_0)=i\}\rvert$. Then
$$ q\text{ even}:\ \mu_0=q/2,\ \mu_1=0,\ \mu_2=q/2-1; $$
$$ q\text{ odd}:\ \mu_0=\frac{q-\varepsilon}2,\ \mu_1=1+\varepsilon,\ \mu_2=\frac{q-4-\varepsilon}2, \ \text{where $\varepsilon=(-1)^{(q-1)/2}$.} $$

\end{lemma}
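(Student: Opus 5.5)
The plan is to parametrize the polynomials $t^2+\alpha_0t-1$ that have a root by their roots, as in the proof of Lemma~\ref{lem:roots}, and then remove the excluded value $\alpha_0=0$ at the end. If $r\in\Fq$ is a root of $t^2+\alpha_0t-1$, then $r\neq0$, since the constant term is $-1$. The other root is $s=-1/r$, because the product of the roots is $-1$. Solving $r^2+\alpha_0 r-1=0$ gives $\alpha_0=\varphi(r):=r^{-1}-r$.

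So consider the map $\varphi:\Fq^\times\to\Fq$ together with the involution $\iota(r)=-r^{-1}$ on $\Fq^\times$. We have $\varphi(\iota(r))=-r+r^{-1}=\varphi(r)$, so $\varphi$ is constant on $\iota$-orbits. Conversely, $\varphi(r)=\varphi(r')$ means that $r$ and $r'$ are roots of the same polynomial, so $r'\in\{r,-1/r\}$. Hence $\varphi$ induces a bijection between the $\iota$-orbits and the set of $a\in\Fq$ for which $t^2+at-1$ has a root. The polynomial has exactly one root precisely when $r=-1/r$, that is, when $r^2=-1$, which is exactly when the orbit is a fixed point of $\iota$. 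Free orbits, of size $2$, correspond to polynomials with two distinct roots.

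Next I count the orbits and locate $\varphi^{-1}(0)$. We have $\varphi(r)=0$ if and only if $r^2=1$.
\begin{itemize}
\item For $q$ odd, the fixed points are the solutions of $r^2=-1$, and there are $1+\varepsilon$ of them. This uses the standard fact that $-1$ is a square in $\Fq$ if and only if $q\equiv1\pmod 4$. The remaining $q-2-\varepsilon$ elements form $(q-2-\varepsilon)/2$ free orbits. The preimage of $0$ is $\{1,-1\}$, which is a free orbit because $\iota(1)=-1\neq 1$. Discarding it gives $\mu_1=1+\varepsilon$ and $\mu_2=(q-4-\varepsilon)/2$.
\item For $q$ even, $-1=1$, so the unique fixed point is $r=1$, and this is exactly $\varphi^{-1}(0)$. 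Discarding it gives $\mu_1=0$, and the $q-2$ remaining elements give $\mu_2=(q-2)/2=q/2-1$.
\end{itemize}

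Finally, $\mu_0=(q-1)-\mu_1-\mu_2$, which evaluates to $(q-\varepsilon)/2$ for $q$ odd and to $q/2$ for $q$ even.

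The only delicate point is the bookkeeping around $\alpha_0=0$: one has to check, separately in the two characteristics, whether the discarded orbit $\varphi^{-1}(0)$ is a fixed point of $\iota$ or a free orbit. This is also why the formulas differ between odd and even $q$. The quadratic-residue fact about $-1$ is standard and I will quote it.
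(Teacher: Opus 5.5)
Your proof is correct and follows essentially the same route as the paper: both write $\alpha_0=r^{-1}-r$ for a root $r\in\Fq^\times$, count orbits of the involution $r\mapsto -1/r$ (fixed points are the solutions of $r^2=-1$), and then discard the orbit lying over $\alpha_0=0$, treating odd and even $q$ separately. Your explicit check that distinct orbits give distinct values of $\alpha_0$ makes precise a step the paper leaves implicit.
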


\begin{proof}
As in Lemma~\ref{lem:roots}, a nonzero root $r$ of $t^2+\alpha_0t-1$ pairs with the other root $s=-1/r$. Here $\alpha_0=-(r-1/r)$ depends only on this pair. The involution $r\mapsto-1/r$ on $\Fq^\times$ has a fixed point exactly when $r^2=-1$.

If $q$ is even, then $-1=1$, and hence the fixed-point equation $r^2=-1$ becomes $r^2=1.$ Since $\F_q$ has characteristic $2$, the map $r\mapsto r^2$ is the Frobenius automorphism of $\F_q$ and is therefore injective. Consequently, $r^2=1$ has the unique solution $r=1$. The corresponding pair under the involution $r\mapsto-1/r$ is $\{1,-1\}=\{1\},$ and for this fixed point we have
$$ \alpha_0=-(r-1/r)=-(1-1)=0. $$
Thus, the only fixed point of the involution corresponds to $\alpha_0=0$, which is excluded. Therefore, among the $q-1$ elements of $\F_q^\times$, the remaining $q-2$ elements occur in distinct pairs
$$ \{r,-1/r\}. $$
Each such pair determines one nonzero value of $\alpha_0$, and that value has exactly two roots. Hence
$$ \mu_2=\frac{q-2}{2},\qquad \mu_1=0. $$
Since there are $q-1$ nonzero values of $\alpha_0$ in total,
$$ \mu_0=(q-1)-\mu_1-\mu_2 =q-1-\frac{q-2}{2} =\frac q2. $$
Let $q$ be odd. Then $r=1$ gives the pair $\{1,-1\}$, since $1\ne-1$, and this gives $\alpha_0=0$, which is again excluded. Thus, this pair does not contribute to any $\mu_i$ for $\alpha_0\in\Fq^\times$. The fixed points of the involution $r\mapsto-1/r$ are precisely the solutions of $r^2=-1.$ This equation has $0$ or $2$ solutions, according to whether $\varepsilon=-1$ or $\varepsilon=1$. Hence, the number of fixed points is $1+\varepsilon$. For each fixed point $r$, the two roots of $t^2+\alpha_0t-1$ coincide, since the two roots are $r$ and $-1/r=r$. Moreover,
$$ \alpha_0=-(r-1/r)=-2r\ne0, $$
because $q$ is odd and $r\ne0$. Thus, each fixed point gives a nonzero $\alpha_0$ for which the polynomial has exactly one root. Consequently, $\mu_1=1+\varepsilon.$ The remaining $q-1-(1+\varepsilon)-2=q-4-\varepsilon$ elements of $\Fq^\times$ pair up under the involution $r\mapsto-1/r$. Each such pair corresponds to a nonzero $\alpha_0$ for which the polynomial has two distinct roots. Therefore, $\mu_2=\frac{q-4-\varepsilon}{2}.$

Hence $ \mu_0=(q-1)-\mu_1-\mu_2=\frac{q-\varepsilon}{2}. $ This completes the proof.

\end{proof}

\begin{theorem}\label{thm:full-wd}
Every element of $\mathcal F(3,6)$ belongs to exactly one of the types listed in Table~\ref{tab:full-wd}. The Hamming weight and the number of codewords of each type are given in the table.
\end{theorem}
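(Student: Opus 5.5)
The plan is to partition $\mathcal F(3,6)$ by the top graded component of $f$ and then assemble the counts from the earlier propositions. The split is
\[
\mathcal F(3,6)=\{0\}\sqcup\bigl((V_0\oplus V_1)\setminus\{0\}\bigr)\sqcup\bigl((V_0\oplus V_1\oplus V_2)\setminus(V_0\oplus V_1)\bigr)\sqcup\bigl(\mathcal F(3,6)\setminus(V_0\oplus V_1\oplus V_2)\bigr).
\]
These sets are disjoint and exhaust all $q^{20}$ elements. Each type in Table~\ref{tab:full-wd} is determined by data read off uniquely from the coefficients of $f$, so every $f$ lies in exactly one type. That data is:
\begin{itemize}
\item the degree of $f$;
\item the rank of the quadratic coefficient matrix $C$ and the conditions on $(g,k)$, $(\ell,k)$ or $N_f$;
\item the scalar $\beta$, the rank of the transported linear part, and whether the transported constant is zero (and, in rank $3$, the value $N_f$).
\end{itemize}
The first piece is the zero word, and the second is handled directly by Proposition~\ref{prop:linear}.

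For the degree-two piece, write $f=q_2^C+\ell+\alpha_0$ with $C\neq0$. By Lemma~\ref{lem:V2-transitive}, the multiset of weights over all $(\ell,\alpha_0)$ depends only on $\rank C$. For each rank $r\in\{1,2,3\}$ I take the table from Proposition~\ref{prop:rank1}, \ref{prop:rank2} or \ref{prop:rank3}, respectively, and multiply its counts by $\rho_r$. Since $\beta q_2^{C}$ for $\beta\in\Fq^\times$ is again some $q_2^{C'}$ with $\rank C'=r$, the scalar is already absorbed into $\rho_r$. As a check, each table sums to $q^{10}$, so this piece has $(q^9-1)q^{10}$ elements.

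For the cubic piece, write $f=\beta\det X+q_2^C+\ell+\alpha_0$ with $\beta\neq0$. Scaling by $\beta^{-1}$ preserves weight and is a bijection onto the set with $\beta=1$, which contributes a factor $q-1$. Lemma~\ref{lem:translation-elim} then removes $q_2^C$ at the cost of a factor $q^9$. It remains to count the pairs $(\tilde\ell,\tilde\alpha_0)\in V_1\times\Fq$ by weight. For $\rank\tilde\ell=r<3$ there are $\rho_r$ choices of $\tilde\ell$. Lemma~\ref{lem:rank-only} shows the weight depends only on $r$ and on whether $\tilde\alpha_0=0$. The weights themselves come from Propositions~\ref{prop:cubic-rank0}, \ref{prop:cubic-rank1} and \ref{prop:cubic-rank2}, with one pair for $\tilde\alpha_0=0$ and $q-1$ pairs for $\tilde\alpha_0\neq0$.

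The rank-$3$ case is where I expect the main obstacle, because the weight genuinely depends on $\tilde\alpha_0$ and, by Lemma~\ref{lem:rank-only}, not merely on whether it vanishes. Let $L$ be invertible. Using $(A,B)\mapsto A^tLB^t$, I reduce $L$ to a scalar multiple $cI$. Right-multiplying by a diagonal matrix of determinant $\lambda$ changes the determinant factor to $\gamma=\lambda$ while keeping the trace coefficient within the orbit. After rescaling, $\det X+\tilde\ell+\tilde\alpha_0$ becomes $\det X+\operatorname{tr}X+\alpha_0'$ up to a nonzero scalar, with $\alpha_0'=\tilde\alpha_0\cdot u$ for a unit $u$ depending on $\tilde\ell$. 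The map $(\tilde\ell,\tilde\alpha_0)\mapsto\alpha_0'$ must then be shown equidistributed on $\Fq^\times$ when $\tilde\alpha_0\neq0$; the cleanest way is to use the stabilizer orders of Proposition~\ref{prop:stab} and orbit–stabilizer. Concretely, the orbit of $\det X+\operatorname{tr}X$ under the pairs with $\det A\det B=1$ has size $|\GL_3(\Fq)|^2/((q-1)|\GL_3(\Fq)|)=|\SL_3(\Fq)|$ in $V_1$. Together with the scaling orbit, this gives each nonzero constant value $\rho_3/(q-1)\cdot\ldots$ equally, and I expect the accounting to yield $\rho_3$ pairs with $\alpha_0'=0$, whose weight is given by Proposition~\ref{prop:cubic-rank3} according to the parity of $q$. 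The $\rho_3(q-1)$ pairs with $\alpha_0'\neq0$ should split as $\rho_3\mu_i$ for weight $q^9-q^8+(1-i)q^4-q^3$, with $\mu_i$ from Lemma~\ref{lem:M-counts}.

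Finally, every cubic-type count gets multiplied by $(q-1)q^9$. I will verify that the cubic counts total $(q-1)q^{19}$, and that all four pieces together sum to $q^{20}$.
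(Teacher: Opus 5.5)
Your handling of the degree-$\le 2$ piece and of the cubic pieces with $\rank\tilde\ell<3$ follows the paper's argument. The gap is the rank-$3$ reduction, and it cannot be closed.

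\textbf{The reduction fails for odd $q$.} It is false that $\det X+\tilde\ell+\tilde\alpha_0$ becomes a nonzero multiple of some $\det X+\operatorname{tr}X+\alpha_0'$ when $\det L$ is a non-square. Translations are unavailable, since they create the quadratic term $\operatorname{tr}(\operatorname{adj}(X)U)$. So the only moves are $f\mapsto\beta f(AXB)$, and keeping the coefficient of $\det X$ equal to $1$ forces $\beta=\det(AB)^{-1}$. Starting from $\det X+\operatorname{tr}X$, this reaches exactly the linear parts $\operatorname{tr}(L^tX)$ with $L^t=\det(M)^{-1}M$, where $M=BA$. Hence $\det L=\det(M)^{-2}$ is always a square.

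Your diagonal rescaling does change $\gamma$, but it destroys the shape $cI$. Restoring that shape multiplies $\gamma$ by $c^3$, so the square class of $\det L$ survives. In orbit--stabilizer terms, you overlooked that $-f(-X)=f(X)$ for $f=\det X+\operatorname{tr}X$. The stabilizer of $f$ under this normalized action is $\{(A,\pm A^{-1})\}$, of order $2|\GL_3(\Fq)|$ for odd $q$. So the orbit contains only $\rho_3/2$ of the rank-$3$ linear forms, not $\rho_3$. For even $q$ every element is a square, and your reduction does work, with $\alpha_0'=\tilde\alpha_0/\sqrt{\det L}$.

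\textbf{The other half really has different weights.} Rerun the proof of Proposition~\ref{prop:cubic-rank3} with $\lambda e_3$ in place of $e_3$; the key equation becomes $x_2y_1=\lambda-x_1^2-(\alpha_0/\lambda)x_1$. This gives $\wt(\det X+\lambda\operatorname{tr}X+\alpha_0)=q^9-q^8+(1-N)q^4-q^3$, where $N$ is the number of roots of $t^2+(\alpha_0/\lambda)t-\lambda$.
For $\lambda$ a non-square and $\alpha_0=0$ there are no roots. So $\det X+\lambda\operatorname{tr}X$ has weight $q^9-q^8+q^4-q^3$, not $q^9-q^8-q^4-q^3$.
For $\alpha_0\neq0$, one gets $N=1$ for $1-\varepsilon$ values of $\alpha_0$, and $N=0$, $N=2$ for $(q-2+\varepsilon)/2$ values each, with $\varepsilon$ as in Lemma~\ref{lem:M-counts}.

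Consequently, the accounting you expect, which is exactly what Table~\ref{tab:full-wd} records, is wrong for odd $q$. The correct rank-$3$ cubic contributions are $(q-1)q^9\rho_3\cdot\frac{q-2}{2}$, $(q-1)q^9\rho_3\cdot\frac{q}{2}$ and $(q-1)q^9\rho_3$ codewords of weights $q^9-q^8-q^4-q^3$, $q^9-q^8+q^4-q^3$ and $q^9-q^8-q^3$, respectively. These hold for every $q$, and they agree with the table only when $q$ is even.

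The paper's own proof has the same hole. Lemma~\ref{lem:rank-only} disposes of $r=3$ by citing Proposition~\ref{prop:cubic-rank3}, which treats only $\operatorname{tr}X$.

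\textbf{Independent check.} Affine functions separate points. So over the coset $\det X+V_0\oplus V_1$ one must have $\sum_f|Z(f)|^2=q^9(q^9-1)q^8+q^{18}$, where $Z(f)$ is the zero set. The corrected counts satisfy this identity, whereas the table's values miss it by $q^8\rho_3$ when $q$ is odd.
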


\begin{proof}
First, if $f\in V_2\oplus V_1\oplus V_0$ then  $f=q_2^C+\ell+\alpha_0$ with $C\in M_3(\Fq)$ of rank $r$. Fix a rank-$r$ representative and let $(\ell,\alpha_0)$ range over $V_1\times\Fq$. By Lemma~\ref{lem:V2-transitive} this gives the weight distribution of Propositions~\ref{prop:linear}--\ref{prop:rank3}. Summing over the $\rho_r$ matrices of rank $r$ gives the stated numbers. The total is $(\rho_0+\rho_1+\rho_2+\rho_3)q^{10}=q^{19}$ codewords.

Now suppose that $f\in\mathcal F(3,6)\setminus(V_2\oplus V_1\oplus V_0)$, so that $\det(X)$ appears in $f$ with a nonzero coefficient $\beta$. Since multiplication by a nonzero scalar does not change the Hamming weight, we may assume that the coefficient of $\det(X)$ is $1$. For counting codewords, however, we must multiply the resulting numbers by $q-1$, corresponding to the possible choices of $\beta\in\Fq^\times$.

By Lemma~\ref{lem:translation-elim}, for each pair $(\tilde\ell,\tilde\alpha_0)$ there are $q^9$ codewords of the same weight. By Lemma~\ref{lem:rank-only}, this weight depends only on $\operatorname{rank}(\tilde\ell)$ and on whether $\tilde\alpha_0$ is zero or nonzero. When $\operatorname{rank}(\tilde\ell)=3$ and $\tilde\alpha_0\ne0$, the weights split further according to $N_f$. The corresponding weights are those computed in Propositions~\ref{prop:cubic-rank0}--\ref{prop:cubic-rank3}.

For each rank $r$, there are $q^9\rho_r$ codewords corresponding to $\tilde\alpha_0=0$ and $q^9\rho_r(q-1)$ codewords corresponding to $\tilde\alpha_0\ne0$. For $r=3$, the latter are further divided into $q^9\rho_3\mu_i$ codewords with $N_f=i$, by Lemma~\ref{lem:M-counts}. Finally, multiplying these numbers by $q-1$ to account for the possible nonzero values of $\beta$ gives the stated rows with $\beta\ne0$. Their total number is $(q-1)q^{19}$.

Together with the $q^{19}$ codewords for which $\beta=0$, this gives $q^{19}+(q-1)q^{19}=q^{20}, $ as required, since $|\mathcal F(3,6)|=q^{20}$.

\end{proof}

\begin{table}[htbp]
\centering
\caption{Complete weight distribution of $C^{\mathbb A}(3,6)$, listing each of the $16$ distinct Hamming weights together with the number of codewords of that weight and the type(s) of representative polynomial giving rise to it. Here $C$ is the coefficient matrix of the quadratic part, $\ell$ the linear part, $\rho_r$ is as above, and $\mu_i$ as in Lemma~\ref{lem:M-counts}, which depends on the residue class modulo 4.\\}
\label{tab:full-wd}
\renewcommand{\arraystretch}{1.3}
\small
\begin{tabular}{p{0.19\textwidth}|p{0.36\textwidth}|p{0.36\textwidth}}
\hline
\textbf{Weight} & \textbf{Number of codewords} & \textbf{Type}\\
\hline
\multicolumn{3}{c}{$\beta=0$ (no cubic term)}\\
\hline
$0$ & $1$ & $C=0,\ \ell=0,\ \alpha_0=0$\\
$q^9$ & $q-1$ & $C=0,\ \ell=0,\ \alpha_0\ne0$\\
$q^9-q^8$ & $(q^{10}-q)+\rho_1(q^{10}-q^5)+\rho_2(q^{10}-q^7)+\rho_3q^9$ & $C=0,\ell\ne0$;\ $\mathrm{rank}\,C=2$;\ $\mathrm{rank}\,C=1,g\ne0$;\  $\mathrm{rank}\,C=3,N_f=1$\\
\hline
$q^6(q^2-1)(q-1)$ & $\rho_1q^4$ & $\mathrm{rank}\,C=1,\ g=0,\ k=0$\\
$q^6(q^3-q^2+1)$ & $\rho_1(q^5-q^4)$ & $\mathrm{rank}\,C=1,\ g=0,\ k\ne0$\\
\hline
$q^5(q^3-1)(q-1)$ & $\rho_2q^6$ & $\mathrm{rank}\,C=2,\ k=0$\\
$q^5(q^4-q^3+1)$ & $\rho_2q^6(q-1)$ & $\mathrm{rank}\,C=2,\ k\ne0$\\
\hline
$q^9-q^8+q^4$ & $\rho_3q^9(q-1)/2$ & $\mathrm{rank}\,C=3,\ N_f=0$\\
$q^9-q^8-q^4$ & $\rho_3q^9(q-1)/2$ & $\mathrm{rank}\,C=3,\ N_f=2$\\
\hline
\multicolumn{3}{c}{$\beta\ne0$ (cubic term present)}\\
\hline
$|\GL_3(\Fq)|$ & $(q-1)q^9$ & $\mathrm{rank}\,\ell=0,\ \alpha_0=0$\\
$q^9-|\SL_3(\Fq)|$ & $(q-1)^2q^9$ & $\mathrm{rank}\,\ell=0,\ \alpha_0\ne0$\\
\hline
$q^9-q^8-(q^3-q)(q^3-q^2)$ & $(q-1)q^9\rho_1$ & $\mathrm{rank}\,\ell=1,\ \alpha_0=0$\\
$q^9-q^8+q^5-q^3$ & $(q-1)^2q^9\rho_1$ & $\mathrm{rank}\,\ell=1,\ \alpha_0\ne0$\\
\hline
$q^9-q^8+q^4-q^3$ & $(q-1)q^9\rho_2+(q-1)q^9\rho_3\mu_0$ & $\mathrm{rank}\,\ell=2,\ \alpha_0=0$;\ $\mathrm{rank}\,\ell=3,\ \alpha_0\ne0,\ N_f=0$\\
\hline
$q^9-q^8-q^3$ & $(q-1)^2q^9\rho_2+(q-1)q^9\rho_3\mu_1$ if $q$ is odd & $\mathrm{rank}\,\ell=2,\ \alpha_0\ne0$;\ $\mathrm{rank}\,\ell=3,\ \alpha_0\ne0,\ N_f=1$\\
$q^9-q^8-q^3$ & $(q-1)^2q^9\rho_2+(q-1)q^9\rho_3\mu_1+(q-1)q^9\rho_3$ if $q$ is even & $\mathrm{rank}\,\ell=2,\ \alpha_0\ne0$;\ $\mathrm{rank}\,\ell=3,\ \alpha_0\ne0,\ N_f=1$;\ $\mathrm{rank}\,\ell=3,\ \alpha_0=0$\\
\hline
$q^9-q^8-q^4-q^3$ & $(q-1)q^9\rho_3\mu_2$ if $q$ is even & $\mathrm{rank}\,\ell=3,\ \alpha_0\ne0,\ N_f=2$\\
$q^9-q^8-q^4-q^3$ & $(q-1)q^9\rho_3\mu_2+(q-1)q^9\rho_3$ if $q$ is odd & $\mathrm{rank}\,\ell=3,\ \alpha_0\ne0,\ N_f=2$;\ $\mathrm{rank}\,\ell=3,\ \alpha_0=0$\\
\hline
\end{tabular}
\end{table}

\begin{remark}\label{rem:general}
The weight spectrum of $C^{\AA}(\ell,m)$ for general $\ell$ and $m$ is still open, and Theorem~\ref{thm:full-wd} settles the first case with $\ell\ge3$. The general problem is hard for three reasons. On the Grassmann side, the weight of a codeword is constant on the orbits of the general linear group acting on alternating $\ell$-forms, so the spectrum reduces to a classification of such forms; but this classification is known only for $\ell=2$, where the rank is a complete invariant, and for $\ell=3$ in small dimension. Next, the affine cell has a smaller symmetry group, so its orbits are strictly finer, and the spectrum of an affine Grassmann code does not follow from that of the corresponding Grassmann code even when the latter is known. This is illustrated sharply by the present case: Nogin~\cite{Nogin1997} showed that the Grassmann code $C(3,6)$ has only $5$ distinct non-zero weights, whereas Theorem~\ref{thm:full-wd} shows that the affine Grassmann code $C^{\AA}(3,6)$ has $15$. Finally, Theorem~\ref{thm:full-wd} shows that for $\ell=3$ the weight is no longer a function of a single rank, since the arithmetic of $\Fq$ enters through the number of roots of a quadratic polynomial; so no answer for general $\ell$ can be phrased in terms of rank data alone. Our approach avoids the classification problem altogether, and two of its steps are not special to $3\times3$ matrices. The translation of Lemma~\ref{lem:translation-elim} uses only the adjugate identity, and specializing one row of the generic matrix carries a codeword of $C^{\AA}(\ell,m)$ to a codeword of $C^{\AA}(\ell-1,m-1)$. This suggests an induction on $\ell$, in which the present article and \cite{PS2019} provide the first two steps, and for which $C^{\AA}(3,m)$ with general $m$ is the natural next target.
\end{remark}

\section*{Acknowledgment}
Prasant Singh would like to thank the UGC, Government of India, for the INCP2 grant. This work was initiated during his visit to UiT-The Arctic University of Norway. He gratefully acknowledges UiT for the hospitality extended to him during his visit. He would also like to thank Sudhir Ghorpade for his valuable inputs and stimulating discussions on this problem.

\vskip 2cm

\bibliographystyle{plain}
\bibliography{Version_2.bib}

\end{document}